\pdfoutput=1
\documentclass[letterpaper,10pt,conference]{ieeeconf}

\usepackage[T1]{fontenc}
\usepackage{amsmath,amssymb,amsfonts,mathtools}
\usepackage{bm}
\usepackage{xcolor}
\usepackage{graphicx}

\IEEEoverridecommandlockouts

\usepackage{amsthm}
\let\labelindent\relax
\usepackage{enumitem}

\DeclareMathOperator*{\argmin}{arg\,min}
\DeclareMathOperator{\col}{col}
\newcommand{\eqdef}{\mathrel{\mathop:}=}
\allowdisplaybreaks

\newtheorem{theorem}{Theorem}
\newtheorem{lemma}[theorem]{Lemma}

\theoremstyle{definition}
\newtheorem{definition}{Definition}
\newtheorem{assumption}{Assumption}
\newtheorem{problem}{Problem}

\theoremstyle{remark}
\newtheorem{remark}{Remark}

\makeatletter
\newcommand{\objitem}[2]{
    \item[\textbf{#1}]
    \edef\@currentlabel{#1}
    \label{#2}
}
\makeatother

\usepackage{cite}
\makeatletter
\let\NAT@parse\undefined
\makeatother
\usepackage[colorlinks=true,
            linkcolor=blue,
            citecolor=blue,
            urlcolor=blue,
            pdfborder={0 0 0}]{hyperref}

\hypersetup{
    pdfauthor={Hamza Tariq and Adeel Akhtar},
    pdftitle={Decentralized Safe Path Following for Multiple Quadrotors on Intersecting Paths with Theoretical Guarantees},
    pdfdisplaydoctitle=true,
    pdfsubject={Safety-Critical Control, Path Following},
    pdfkeywords={safe path following, quadrotors, control barrier functions, transverse feedback linearization},
    bookmarksnumbered,
    pdfstartview={FitV}
}

\begin{document}

\title{Decentralized Safe Path Following for Multiple Quadrotors on Intersecting Paths with Theoretical Guarantees}

\author{Hamza Tariq$^{1}$ and Adeel Akhtar$^{1}$
\thanks{$^{1}$Hamza Tariq and Adeel Akhtar are with the Department of Mechanical \& Industrial Engineering at the New Jersey Institute of Technology, Newark, NJ 07102, USA. {\tt\small \{ht299, adeel.akhtar\}@njit.edu}}}

\maketitle

\begin{abstract}
This paper studies decentralized safe path following for multiple quadrotors on intersecting paths, where safety requires both collision avoidance and strict adherence to pre-assigned routes. The proposed controller reformulates transverse feedback linearization as a constrained quadratic program with four equality constraints: two enforce convergence to and strict adherence to the path, and two prescribe the desired speed and heading. Safety is achieved by relaxing only the along-path speed constraint, while the path and heading constraints remain hard. Under the stated assumptions, and for admissible initial conditions, the controller guarantees that all agents converge to and thereafter follow their assigned paths, avoid collisions, and avoid attitude singularities. We evaluate the controller in the Drake physics engine on non-planar intersecting paths and compare it against two nominal-plus-safety-filter cascades. Code and additional results are available at {\footnotesize \url{https://gradslab.github.io/safe_multiquad_pf/}}.
\end{abstract}

\section{INTRODUCTION}

Multi-quadrotor systems are increasingly deployed in applications that require multiple vehicles to follow prescribed routes within a shared airspace, such as warehouse inventory and cooperative inspection of power lines and bridges. A vehicle that deviates from its route is both a collision hazard and a worse sensor, because what it records depends on where it flies~\cite{beul2018fast,silano2021power}. We study \emph{safe path following}, which requires simultaneously enforcing collision avoidance and strict adherence to assigned routes, i.e., \emph{path invariance}.

Path invariance means that once the state of a vehicle reaches its path-following manifold~\cite{nielsen2008local}, formalized in Section~\ref{sec:prelim}, its deviation from that path remains zero.  Path invariance matters most in settings where lateral deviations are not acceptable, for example flight through narrow corridors and tunnels, or inspection at close range. There, the vehicle must also hold a prescribed heading so that an onboard camera stays pointed at the structure~\cite{mansouri2018cooperative}.

Prior work on safe path following for multi-quadrotor systems handles collision avoidance indirectly through nonintersecting path design or timing coordination~\cite{Cichella2012cdc,Mehdi2017cdc,xie2020cooperative}. These strategies are not suitable in confined environments, where nonintersecting paths might not exist and agents repeatedly pass through shared intersections. An alternative is to use control barrier functions (CBFs) or exponential CBFs (ECBFs) to design \emph{safety filters} for collision avoidance~\cite{ames2019control, nguyen2016exponential, wang2017safety}, but these filters generally do not preserve path invariance. For example,~\cite{yao2022guiding} applies a safety filter to a multi-agent guiding-vector-field controller, but the resulting closed-loop system does not guarantee path invariance when paths intersect. Consequently, most work on safe multi-quadrotor systems has focused on trajectory tracking, where deviation from a given route is allowed~\cite{wang2017safe,goarin2025decentralized}. A virtual reference governor has achieved safe path following with obstacle avoidance, but only for a single agent~\cite{li2020fast}. To the best of our knowledge, no decentralized safe path-following controller has been developed that guarantees both collision avoidance and path invariance for multiple quadrotors on intersecting~paths.

\begin{figure}[t]
    \centering
    \includegraphics[width=0.9\linewidth]{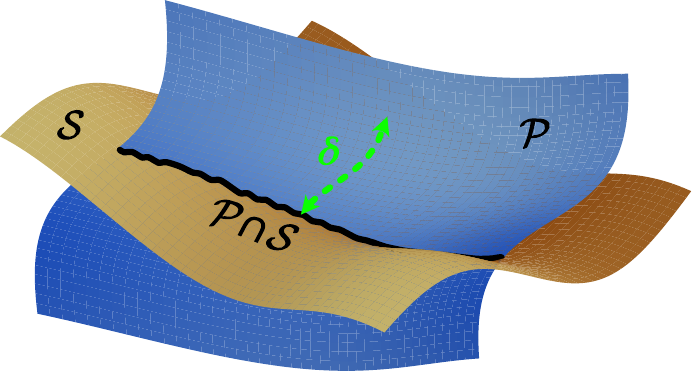}

    \vspace{-0.075in}

    \caption{Manifold interpretation of the proposed relaxation. The
    nominal TFL controller stabilizes the intersection of the
    path-following and secondary-objective manifolds. The scalar slack
    $\delta$ relaxes the along-path speed while the transverse and
    heading equalities remain hard, so the
    path-following manifold remains locally exponentially attractive along admissible closed-loop trajectories.
        }
    \label{fig:manifold_intersection_concept}

    \vspace{-0.25in}

\end{figure}

Transverse feedback linearization (TFL) provides a natural framework for the safe path-following problem~\cite{NieMag04,AkhWasNie2013Journal,akhtar2012path}.  Specifically, TFL splits the control task into a primary objective, which drives the quadrotor to the desired path and keeps it there, and a secondary objective, which determines its speed and heading along the path~\cite{roza2012path}. For the dynamically extended quadrotor model, this separation gives four decoupled integrator chains. Stabilizing two of them stabilizes the path-following manifold, on which the quadrotor stays on its path, and stabilizing the other two stabilizes the secondary-objective manifold, where it maintains the desired speed and heading (Fig.~\ref{fig:manifold_intersection_concept}).
 We first reformulate the TFL controller in~\cite{akhtar2012path} as an equivalent parametric quadratic program (QP) with four equality constraints: two \emph{transversal} constraints that stabilize the path-following manifold and two \emph{tangential} constraints that stabilize the secondary-objective manifold.
We then augment the QP with higher-order ECBF constraints~\cite{nguyen2016exponential} to ensure that each quadrotor avoids attitude singularities and collisions with other agents, extending the decentralization approach in~\cite{wang2017safety} to safe path following. This four-constraint structure is the central strength of the framework. Each equality constraint realizes exactly one control objective, so a designer may relax any subset of the four, including all four, by attaching a slack variable to the chosen constraints. Relaxing all four, while minimizing the deviation from the nominal input, recovers the standard nominal-plus-safety-filter cascade (Remark~\ref{rem:cascade}). The choice sets a tradeoff between guarantees and feasibility, because every constraint kept hard is an objective enforced exactly, yet each hard constraint also shrinks the feasible set of the QP. Here we treat three constraints as hard and relax only the speed, so the speed channel is the only nominal objective allowed to deviate, while the nominal transverse and heading-error dynamics are preserved exactly. The main contributions of this paper are as follows:
\begin{itemize}
    \item We propose the first decentralized path-following controller for multiple quadrotors on intersecting paths that guarantees path invariance, collision avoidance, and singularity avoidance under the stated assumptions.

    \item We reformulate the TFL controller as a QP that relaxes only the speed constraint, preserves the transverse and heading dynamics, and admits a closed-form solution (Remark~\ref{rem:interval}).

    \item We validate the controller in the Drake physics engine~\cite{drake} on non-planar intersecting paths against two safety-filtered cascade baselines.
\end{itemize}

\textit{Notation:} For scalars \(x_1,\dots,x_n\), \(\mathbf x:=\operatorname{col}(x_1,\dots,x_n)\in\mathbb R^n\). The symbols \(\|\cdot\|\) and \(\|\mathbf x\|_G:=\sqrt{\mathbf x^\top G\mathbf x}\), with \(G\succ0\), denote the Euclidean and weighted norms, respectively. The operators \(\operatorname{diag}(\cdot)\) and \(\operatorname{blkdiag}(\cdot)\) form diagonal and block-diagonal matrices, and \(\prod\) denotes the Cartesian product. For a smooth scalar function \(\lambda\) and smooth vector fields \(f\) and \(g\), define \(L_f^0\lambda:=\lambda\), \(L_f^k\lambda:=L_f(L_f^{k-1}\lambda)\) for \(k\ge1\), and \(L_gL_f^k\lambda:=L_g(L_f^k\lambda)\).

\section{Preliminaries}\label{sec:prelim}

\subsection{Exponential Control Barrier Functions}

Consider a nonlinear control-affine system with state $\mathbf x\in\mathbb R^n$, control input $\mathbf u \in \mathbb R^m$, and dynamics
\begin{equation}\label{eq:generic_system}
    \dot{\mathbf x} = f(\mathbf x) + g(\mathbf x)\mathbf u,
\end{equation}
where $f:\mathbb R^n \to \mathbb R^n$ and $g: \mathbb R^n \to \mathbb R^{n \times m}$ are locally Lipschitz continuous. Given a smooth function $h:\mathbb R^n\to\mathbb R$, define the safe set $\mathcal C\eqdef\{\mathbf x\in\mathbb R^n:h(\mathbf x)\ge0\}$. The function $h$ is a CBF if it satisfies the standard relative-degree-one CBF condition in~\cite[Definition 2]{ames2019control}, and any locally Lipschitz controller enforcing this condition renders $\mathcal C$ forward invariant. When $L_gh\equiv\mathbf 0$, $h$ has relative degree greater than one, which motivates \emph{exponential} CBFs.

\begin{definition}[Exponential Control Barrier Function~\cite{nguyen2016exponential}]\label{def:ecbf}
    A smooth function $h:\mathbb R^n\to \mathbb R$, with relative degree $r\in \mathbb N$, defining the set $\mathcal{C}$ is an Exponential CBF (ECBF) for~\eqref{eq:generic_system} if there exists $\bm k \in \mathbb R^r$ such that the polynomial $s^r+k_r s^{r-1}+\dots+k_2 s+k_1=\prod_{i=1}^r (s-\nu_i)$ has negative real roots $(\nu_1,\dots,\nu_r) \in \mathbb R^r$ and for all $\mathbf x\in\mathbb R^n$,
        $\sup_{\mathbf u\in\mathbb R^m}\{L_f^rh(\mathbf x) +L_gL_f^{r-1}h(\mathbf x)\mathbf u\} \ge -\bm k^\top H(\mathbf x)$,
    where $H(\mathbf x)\eqdef \operatorname{col}(h(\mathbf x),\,L_f h(\mathbf x),\,\dots,\,L_f^{r-1}h(\mathbf x))$.
\end{definition}
Now, for a given ECBF, we define the functions $\Lambda_0(\mathbf x) \eqdef h(\mathbf x)$ and $\Lambda_i(\mathbf x) \eqdef d\Lambda_{i-1}/dt - \nu_i\Lambda_{i-1}$ for all $i\in\{1,\dots,r-1\}$.
The corresponding ECBF safe set is then given by
\begin{equation}\label{eq:ecbf_safe_sets}
    \mathcal B \eqdef \cap_{i=0}^{r-1}\mathcal C_i \eqdef \cap_{i=0}^{r-1}\{\mathbf x \in \mathbb R^n:\Lambda_i(\mathbf x)\ge0\}.
\end{equation}
We now review the main result on ECBFs from~\cite{nguyen2016exponential}.
\begin{theorem}\label{thm:ecbf}
    If $h$ is an ECBF for~\eqref{eq:generic_system}, then any locally Lipschitz feedback controller $\kappa: \mathbb R^n\to\mathbb R^m$ satisfying the constraint $L_f^rh(\mathbf x) +L_gL_f^{r-1}h(\mathbf x)\kappa(\mathbf x)\ge -\bm k^\top H(\mathbf x)$ for all $\mathbf x\in \mathbb R^n$ renders $\mathcal B$~\eqref{eq:ecbf_safe_sets} forward invariant under the closed-loop dynamics induced by $\mathbf u=\kappa(\mathbf x)$. Hence, if $\mathbf x_0 \in \mathcal B$, then $h(\mathbf x(t))\ge0$ for all $t\ge0$.
\end{theorem}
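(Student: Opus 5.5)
The plan is to reduce the high-order condition to a chain of first-order scalar differential inequalities along closed-loop trajectories and apply the comparison lemma to each in turn.

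\textbf{Setup.} Fix the closed-loop system $\dot{\mathbf x}=f(\mathbf x)+g(\mathbf x)\kappa(\mathbf x)$. Its right-hand side is locally Lipschitz, so solutions exist and are unique on a maximal interval $[0,T)$. Along such a solution, since $h$ has relative degree $r$, we have $h^{(j)}=L_f^jh$ for $j<r$ and $h^{(r)}=L_f^rh+L_gL_f^{r-1}h\,\kappa$. Note that the $\Lambda_i$ are state functions: because the relative degree is $r$, $d\Lambda_{i-1}/dt$ for $i\le r-1$ involves only Lie derivatives along $f$. Factoring $s^r+k_rs^{r-1}+\dots+k_1=\prod_i(s-\nu_i)$, the recursion gives $\Lambda_i=\prod_{j=1}^{i}(\tfrac{d}{dt}-\nu_j)h$. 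Defining formally $\Lambda_r\eqdef\dot\Lambda_{r-1}-\nu_r\Lambda_{r-1}$, we get $\Lambda_r=h^{(r)}+\bm k^\top H$. The constraint in the theorem is therefore exactly $\Lambda_r(t)\ge0$ along trajectories.

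\textbf{Backward induction.} Work from $i=r-1$ down to $0$. We have $\dot\Lambda_{r-1}=\nu_r\Lambda_{r-1}+\Lambda_r\ge\nu_r\Lambda_{r-1}$. By the comparison lemma (or variation of constants), $\Lambda_{r-1}(t)\ge e^{\nu_r t}\Lambda_{r-1}(0)\ge0$ whenever $\Lambda_{r-1}(0)\ge0$. Inductively, if $\Lambda_i(t)\ge0$ for all $t$, then $\dot\Lambda_{i-1}\ge\nu_i\Lambda_{i-1}$, which yields
\[
\Lambda_{i-1}(t)\ge e^{\nu_i t}\Lambda_{i-1}(0)\ge0 .
\]
Since $\mathbf x_0\in\mathcal B$ gives $\Lambda_i(0)\ge0$ for every $i$, each $\Lambda_i$ stays nonnegative on $[0,T)$. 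Hence $\mathbf x(t)\in\mathcal B$, and in particular $h(\mathbf x(t))=\Lambda_0\ge0$. The negativity of the roots is not needed for invariance itself; it only makes the lower bounds decay. Real roots are what allow each factor to be handled as a scalar inequality.

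\textbf{Main obstacle.} The key step is verifying the identity $\Lambda_r=L_f^rh+L_gL_f^{r-1}h\,\kappa+\bm k^\top H$. This requires expanding $\prod_j(\tfrac{d}{dt}-\nu_j)h$ and matching coefficients with the polynomial, using Vieta's relations. It also requires confirming that no control terms appear before order $r$, which is the relative-degree assumption. I would prove this by induction on $i$, showing that $\Lambda_i=\sum_{j=0}^{i}c^{(i)}_jL_f^jh$, where $c^{(i)}_j$ are the coefficients of $\prod_{m\le i}(s-\nu_m)$.

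Finally, forward invariance is conventionally stated for $t\ge0$. If $T<\infty$ because of finite escape, the conclusion still holds on the maximal interval, which is the standard meaning of forward invariance here.
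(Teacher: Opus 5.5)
Your proof is correct and is essentially the argument the paper relies on. The paper states this result without proof, citing Nguyen and Sreenath, and later refers to it as ``the comparison argument underlying'' the theorem, which is your chain $\dot\Lambda_{i-1}=\nu_i\Lambda_{i-1}+\Lambda_i$: it propagates $\Lambda_r\ge0$ down to $\Lambda_0=h$ via the integrating factor $e^{-\nu_i t}$, and your side remarks (only realness of the roots matters for invariance of $\mathcal B$, and the conclusion is meant on the maximal interval of existence) are also accurate.
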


ECBFs are commonly deployed as \emph{safety filters}~\cite{ames2019control,nguyen2016exponential}: given a nominal input $\mathbf u_{\mathrm{nom}}=\kappa_{\mathrm{nom}}(\mathbf x)$, the applied input is
\begin{equation}\label{eq:qp_safety_filter}
\begin{aligned}
\kappa(\mathbf x)=\argmin_{\mathbf u\in\mathbb R^m}\;
& \tfrac12\|\mathbf u-\mathbf u_{\mathrm{nom}}\|^2_{W}\\
\text{s.t.}\;
& L_f^rh(\mathbf x)+L_gL_f^{r-1}h(\mathbf x)\mathbf u \ge -\bm k^\top H(\mathbf x),
\end{aligned}
\end{equation}
with $W\succ0$.

\subsection{Multi-Quadrotor System Model}
We consider $N$ homogeneous quadrotors indexed by $\mathcal V\eqdef\{1,\dots,N\}$ over a complete communication graph, so each agent can transmit its state to every other agent. Each agent computes its own control input locally rather than through a centralized optimizer, which is the sense of decentralized in~\cite{wang2017safety,goarin2025decentralized}. For agent $i\in\mathcal V$, let $x_q^i\eqdef\col(x^i,y^i,z^i)\in\mathbb R^3$ and $v_q^i\eqdef\col(v_x^i,v_y^i,v_z^i)$ be the position and velocity, $\Phi^i\eqdef\col(\phi^i,\theta^i,\psi^i)$ the roll, pitch, and yaw angles, and $\omega^i\eqdef\col(p^i,q^i,r^i)$ the body rates. With inertia $J\eqdef\operatorname{diag}(I_x,I_y,I_z)$, mass $m$, gravity $g$, $b_3\eqdef\col(0,0,1)$, torques $\tau^i\in\mathbb R^3$, and collective thrust $u_f^i\in\mathbb R$, the dynamics are~\cite{bouabdallah2005backstepping}
\begin{equation}\label{eq:quad_dynamics}
\begin{aligned}
    \dot\Phi^i &= Y^{i}\omega^i, & J\dot\omega^i &= \tau^i-(\omega^i\times J\omega^i),\\
    \dot x_q^i &= v_q^i, & m\dot v_q^i &= mgb_3-u_f^iR^ib_3,
\end{aligned}
\end{equation}
where $Y^i$ and $R^i$ are standard Euler-angle kinematic and rotation matrices~\cite{roza2012path}. We take the output $\mathbf y^i \in \mathbb R^3$ to be the inertial position $\mathbf y^i=x_q^i$.

\subsection{Dynamic Transverse Feedback Linearization (TFL)}

We now define the path-following problem for a single agent, adapted from \cite{roza2012path}.

\begin{definition}[Path-following problem]\label{def:pf}
Consider a quadrotor~\eqref{eq:quad_dynamics} and a prescribed smooth path
$\gamma\subset\mathbb{R}^3$. Design a feedback controller such that the closed-loop system satisfies the following primary and secondary objectives:
\begin{enumerate}[leftmargin=2.5em]
    \objitem{PO}{pf_po}
    the output converges to the path $\gamma$ and remains on it for all future time;
    \objitem{SO}{pf_so}
    the path-coordinate rate (speed) and yaw angle converge to prescribed values along the path $\gamma$.
\end{enumerate}
\end{definition}
As in~\cite{akhtar2012path}, we apply dynamic extension to~\eqref{eq:quad_dynamics} and render the system suitable for path following via TFL. In particular, a well-defined vector relative degree can be obtained by appending two integrators to the input $u_f^i$. Let $x_{13}^i \eqdef u_f^i$, $x_{14}^i \eqdef \dot u_f^i$, and $u_d^i \eqdef \dot x_{14}^i$. The resulting extended dynamics for agent $i$ are \eqref{eq:quad_dynamics}, with the thrust $u_f^i=x_{13}^i$ now a state, together with
\begin{equation}\label{eq:ext_quad_dynamics}
\dot x_{13}^i = x_{14}^i, \qquad
\dot x_{14}^i = u_d^i.
\end{equation}
With a slight abuse of notation, we define the extended state as $\mathbf x^i \eqdef \operatorname{col}(\Phi^i,\,\omega^i,\,x_q^i,\,v_q^i,x_{13}^i,x_{14}^i) \in \mathbb R^{14}$ and the extended
control input as $\mathbf u^i=\operatorname{col}(u_d^i,\tau_p^i,\tau_q^i,\tau_r^i) \in \mathbb R^4$. The output is still the inertial position $\mathbf y^i = c(\mathbf x^i) = x_q^i$, where $c:\mathbb R^{14}\to\mathbb R^3$. The extended model, \eqref{eq:quad_dynamics} together with~\eqref{eq:ext_quad_dynamics}, can be written in control-affine form
\begin{equation}\label{eq:extended_control_affine}
    \dot{\mathbf x}^i = f(\mathbf x^i) + g(\mathbf x^i)\mathbf u^i,
\end{equation}
for some smooth maps $f$ and $g$. For details, see~\cite{akhtar2012path}. Next, we define the combined state and input of the multi-quadrotor system as $\mathbf x\eqdef \operatorname{col}(\mathbf{x}^1,\dots,\mathbf x^N) \in \mathbb R^{14N}$ and $\mathbf u:=\operatorname{col}(\mathbf u^1,\dots,\mathbf u^N)\in\mathbb{R}^{4N},$ respectively. Moreover, we define the combined closed-loop dynamics as
\begin{equation}\label{eq:full_dynamics}
    \dot{\mathbf x}= F(\mathbf x)+G(\mathbf x)\mathbf u,
\end{equation}
where
$F(\mathbf x):=\operatorname{col}(f(\mathbf x^1),\dots,f(\mathbf x^N)),$ and
$G(\mathbf x):=\operatorname{blkdiag}(g(\mathbf x^1),\dots,g(\mathbf x^N))$.

As in~\cite{akhtar2012path}, for each agent \(i\), let
\(\gamma^i\subset\mathbb R^3\) be a smooth embedded regular curve with a
smooth regular parameterization
\(\sigma^i:\operatorname{dom}(\sigma^i)\to\mathbb R^3\), where
\(\operatorname{dom}(\sigma^i)\subset\mathbb R\) is an interval. For a closed path, \(\operatorname{dom}(\sigma^i)=\mathbb R\) and \(\sigma^i\) is periodic with parameter period \(T_i>0\).
The parameter defines an along-path coordinate whose desired rate
is specified through the secondary objective. Throughout, we refer
to this path-coordinate rate as the along-path speed. Let \(s^i:\mathbb R^3\to\mathbb R^2\), with \(s^i=\operatorname{col}(s_1^i,s_2^i)\), be a smooth map such that
\(\gamma^i=\{\mathbf y^i\in\mathbb R^3:s^i(\mathbf y^i)=\bm 0\}\) and \(\operatorname{rank}Ds^i(\mathbf y^i)=2\) for all \(\mathbf y^i\in\gamma^i\),
where \(Ds^i(\mathbf y^i)\in\mathbb R^{2\times3}\) denotes the Jacobian of \(s^i\). Thus, the two components of \(s^i\) provide independent local transverse coordinates to the path. Define \(\alpha^i\eqdef\operatorname{col}(\alpha_1^i,\alpha_2^i)\), where \(\alpha_1^i\eqdef s_1^i\circ c\) and \(\alpha_2^i\eqdef s_2^i\circ c\) are smooth maps from \(\mathbb R^{14}\) to \(\mathbb R\). We lift \(\gamma^i\) to the state space of the extended model~\eqref{eq:extended_control_affine} by defining \(\Gamma^i\eqdef\{\mathbf x^i\in\mathbb R^{14}:\alpha_1^i(\mathbf x^i)=0,\ \alpha_2^i(\mathbf x^i)=0\}\). Thus, output convergence to \(\gamma^i\), corresponding to the primary objective~\ref{pf_po}, is equivalent to state convergence to \(\Gamma^i\). In general, \(\Gamma^i\) is not controlled invariant. We therefore seek to stabilize its maximal controlled-invariant subset~\cite{nielsen2008local}
\begin{equation}\label{eq:path_following_manifold}
    \mathcal P^i=
    \{\mathbf x^i\in\mathbb{R}^{14}: L_f^k\alpha^i(\mathbf x^i)=0,\ k=0,1,2,3\},
\end{equation}
and we call it the \emph{path-following manifold} for agent $i$. We then define the combined path-following manifold as the Cartesian product \(\mathcal P \eqdef \prod_{i\in\mathcal V} \mathcal P^i \subset \mathbb R^{14N}\).
Since the system has four inputs and $\alpha^i$ has two components, we define two additional maps, $\beta_1^i$ and $\beta_2^i$, for each agent $i$. Let \(U_\gamma^i\subset\mathbb R^3\) be a tubular neighborhood of
\(\gamma^i\) on which the nearest-point path coordinate is unique,
and define the smooth map
\(\varpi^i:U_\gamma^i\to\operatorname{dom}(\sigma^i)\) by
\(\varpi^i(\mathbf y^i)\eqdef
\operatorname*{arg\,min}_{d\in\operatorname{dom}(\sigma^i)}
\|\mathbf y^i-\sigma^i(d)\|\). For non-compact paths, a tubular
neighborhood of uniform size exists whenever the path is uniformly
embedded and of bounded
geometry~\cite[Thm.~2.31]{eldering2013noncompact}. We then define
\(\beta_1^i\eqdef\varpi^i\circ c\), which gives the path coordinate
of agent \(i\). This nearest-point construction is one canonical choice. More generally, \(\beta_1^i\) may be any smooth function that restricts on \(\Gamma^i\) to the path parameter of the output and satisfies the relative-degree condition imposed on \(U^i\) below, because the constructions that follow use only these two properties. The simulations in Section~\ref{sec:sim} use such a coordinate. For a closed path, the path coordinate \(\beta_1^i\) is interpreted in a local path-coordinate chart. Since the controller depends only on the Lie derivatives \(L_f^j\beta_1^i\), \(j\ge1\), which are unchanged under a change of local lift by an additive constant, the construction below is independent of the particular local representation. On each connected component of a chart overlap, the two representations differ by an additive constant, so the controller takes the same value in either chart. Next, we define $\beta_2^i(\mathbf x^i)\eqdef\psi^i$, the yaw angle of agent $i$, and the virtual output
\begin{equation}\label{eq:virtual_output}
    \hat{\mathbf y}^i
    \eqdef
    \operatorname{col}\big(\alpha_1^i(\mathbf x^i),\,\alpha_2^i(\mathbf x^i),\,\beta_1^i(\mathbf x^i),\,\beta_2^i(\mathbf x^i)\big).
\end{equation}
Define
\(\mathcal X_D^i\eqdef
\{\mathbf x^i\in\mathbb R^{14}:
\phi^i\neq\pm\pi/2,\ \theta^i\neq\pm\pi/2,\ x_{13}^i\neq0\}\). For
maps \(s^i\) in which \(s_1^i\) is independent of altitude and
\(s_2^i\) depends on altitude alone, \cite[Lemma
5.1]{akhtar2012path} shows that the virtual
output~\eqref{eq:virtual_output} has a well-defined vector relative
degree \(\{4,4,4,2\}\) at every point of
\(\mathcal P^i\cap\mathcal X_D^i\). We make no such claim for a
general path representation. Instead, we restrict the operating set
so that the relative-degree condition holds on it.
For each agent, fix an open set
\(U^i\subset\mathcal X_D^i\cap c^{-1}(U_\gamma^i)\), containing $\mathcal P^i\cap\mathcal X_D^i$, on which the virtual output~\eqref{eq:virtual_output} has vector relative degree \(\{4,4,4,2\}\) and the mapping
\(T^i:U^i\to\mathbb R^{14}\), defined by
\begin{equation}\label{eq:transform}
T^i(\mathbf x^i)
\eqdef
\operatorname{col}\!\left(
(\xi_j^i)_{j=1}^{4},
(\zeta_j^i)_{j=1}^{4},
(\eta_j^i)_{j=1}^{4},
(\mu_k^i)_{k=1}^{2}
\right),
\end{equation}
with \(\xi_j^i\eqdef L_f^{j-1}\alpha_1^i\), \(\zeta_j^i\eqdef L_f^{j-1}\alpha_2^i\), and \(\eta_j^i\eqdef L_f^{j-1}\beta_1^i\) for \(j\in\{1,\ldots,4\}\), and \(\mu_k^i\eqdef L_f^{k-1}\beta_2^i\) for \(k\in\{1,2\}\),
is a local diffeomorphism at every point and the decoupling matrix
\(D^i\) in~\eqref{eq:decoupling_matrix} below is nonsingular on \(U^i\). For the specialized maps of~\cite[Lemma 5.1]{akhtar2012path}, such a set exists because the vector relative degree makes \(D^i\) invertible at every point in $\mathcal P^i\cap\mathcal X_D^i$, hence on an open neighborhood of it by continuity. For the path representations used in Section~\ref{sec:sim}, the relative-degree conditions are confirmed by direct computation of the Lie derivatives of the virtual output. Since
\(\det D^i\) depends only on the attitude, the thrust, and the
position, membership in \(U^i\) constrains no other coordinate. For closed paths, the transformation \(T^i\) is understood locally in a path-coordinate chart, and a local diffeomorphism is sufficient for the analysis~below.

Define
\(\Omega\eqdef\prod_{i\in\mathcal V}U^i\) and
\(\mathcal P_\Omega\eqdef\mathcal P\cap\Omega\). Thus,
\(\Omega\subset\mathbb R^{14N}\) is an open operating neighborhood
of \(\mathcal P_\Omega\), and the
coordinate transformation~\eqref{eq:transform} is valid for every
agent \(i\in\mathcal V\) and every
\(\mathbf x=\operatorname{col}(\mathbf x^1,\ldots,\mathbf x^N)
\in\Omega\). Consequently, using a feedback transformation similar
to that in~\cite{akhtar2012path}, the
extended model~\eqref{eq:extended_control_affine} for each agent \(i\) is reduced
to four decoupled chains of integrators, the \(\xi\)-, \(\zeta\)-, \(\eta\)-, and \(\mu\)-subsystems,
\begin{equation}\label{eq:linear_systems}
\begin{aligned}
&\dot\xi_j^i=\xi_{j+1}^i,\ \ \dot\zeta_j^i=\zeta_{j+1}^i,\ \ \dot\eta_j^i=\eta_{j+1}^i,\quad j\in\{1,2,3\},\\
&\dot\xi_4^i=v_\xi^i,\ \ \dot\zeta_4^i=v_\zeta^i,\ \ \dot\eta_4^i=v_\eta^i,\ \ \dot\mu_1^i=\mu_2^i,\ \ \dot\mu_2^i=v_\mu^i.
\end{aligned}
\end{equation}
Next, the origin of the $\xi$- and $\zeta$-subsystems is stabilized using the auxiliary \emph{transversal} control inputs
\begin{equation}\label{eq:transversal_in}
v_{\xi}^{i} = -\textstyle\sum_{j=1}^{4} k_{\xi_j}^{i}\,\xi_j^{i},
\qquad
v_{\zeta}^{i} = -\textstyle\sum_{j=1}^{4} k_{\zeta_j}^{i}\,\zeta_j^{i},
\end{equation}
 where $k^i_{\xi_j},k^i_{\zeta_j}>0$ for $j\in\{1,\dots,4\}$ are chosen so that the corresponding characteristic polynomials are Hurwitz~\cite{akhtar2012path}. The transversal inputs~\eqref{eq:transversal_in} locally stabilize
the manifold $\mathcal P^i\cap U^i$ relative to $U^i$,
thereby achieving the primary objective~\ref{pf_po}.

Note that \(\eta_2^i\) is the path-coordinate rate variable, while
\(\mu_1^i\) is the yaw variable. Given a constant desired speed \(v^i_{\text{des}}\) and constant desired yaw
\(\psi^i_{\text{des}}\), which may differ between agents, let
$\beta_v^i\eqdef L_f\beta_1^i(\mathbf x^i)-v^i_{\text{des}}$
and
$\beta_\psi^i\eqdef \beta_2^i(\mathbf x^i)-\psi^i_{\text{des}}$. We now define the secondary-objective manifold as
\(\mathcal S^i \eqdef \{\mathbf x^i\in\mathbb{R}^{14}:\beta_v^i=L_f^2\beta_1^i(\mathbf x^i)=L_f^3\beta_1^i(\mathbf x^i)=0,\ \beta_\psi^i=L_f\beta_2^i(\mathbf x^i)=0\}\).
Stabilizing \(\mathcal S^i\) is equivalent to achieving~\ref{pf_so}. To do this, we choose the auxiliary \emph{tangential} control inputs
\begin{equation}\label{eq:tang_in}
    \begin{aligned}
        v_\eta^i &
=
-k_{\eta_2}^i(\eta_2^i-v^i_{\text{des}})
-k_{\eta_3}^i\eta_3^i
-k_{\eta_4}^i\eta_4^i,\\
v_\mu^i&
=
-k_{\mu_1}^i(\mu_1^i-\psi^i_{\text{des}})
-k_{\mu_2}^i\mu_2^i,
    \end{aligned}
\end{equation}
where the gains are chosen so that the two characteristic polynomials are Hurwitz. Smooth time-varying references can be accommodated by adding the corresponding feedforward terms to~\eqref{eq:tang_in}, but constant references suffice here. Combining the transversal and tangential inputs gives the TFL controller
\begin{equation}\label{eq:tfl_controller}
    \mathbf u_{\mathrm{TFL}}^i(\mathbf x^i) = \big(D^i(\mathbf x^i)\big)^{-1}\mathbf v^i(\mathbf x^i),
\end{equation}
where \(D^i:\mathbb R^{14} \to \mathbb R^{4\times 4}\) and \(\mathbf v^i:\mathbb R^{14} \to \mathbb R^{4}\) are defined by
\begin{equation}\label{eq:decoupling_matrix}
D^i(\mathbf x^i):=
\begin{bmatrix}
L_gL_f^3\alpha_1^i(\mathbf x^i)\\
L_gL_f^3\alpha_2^i(\mathbf x^i)\\
L_gL_f^3\beta_1^i(\mathbf x^i)\\
L_gL_f\beta_2^i(\mathbf x^i)
\end{bmatrix},
\;
\mathbf v^i(\mathbf x^i):=
\begin{bmatrix}
-L_f^4\alpha_1^i(\mathbf x^i)+v_\xi^i\\
-L_f^4\alpha_2^i(\mathbf x^i)+v_\zeta^i\\
-L_f^4\beta_1^i(\mathbf x^i)+v_\eta^i\\
-L_f^2\beta_2^i(\mathbf x^i)+v_\mu^i
\end{bmatrix},
\end{equation}
respectively. Therefore, for each agent \(i\in\mathcal V\), the TFL
controller~\eqref{eq:tfl_controller} locally solves the
path-following problem in Definition~\ref{def:pf} by stabilizing
\((\mathcal P^i\cap\mathcal S^i)\cap U^i\), provided that the decoupling matrix \(D^i(\mathbf x^i)\) in~\eqref{eq:decoupling_matrix} remains invertible along the closed-loop trajectories. We refer to the roll and pitch angles \(\phi^i = \pm\pi/2\) and \(\theta^i = \pm\pi/2\), where the TFL controller~\eqref{eq:tfl_controller} is undefined, as \emph{attitude singularities}.

\section{Problem Formulation}\label{sec:prob}

A common way to handle multi-agent collision avoidance is to use CBFs or ECBFs to design a \emph{safety filter}~\eqref{eq:qp_safety_filter}, which keeps each agent $i\in\mathcal V$ at least a safe distance $d_s>0$ from every other agent $j\in\mathcal V\setminus\{i\}$~\cite{wang2017safety,wang2017safe,goarin2025decentralized}.
However, when the prescribed paths intersect, cascading the nominal TFL controller~\eqref{eq:tfl_controller} with the safety filter~\eqref{eq:qp_safety_filter} does not generally guarantee path invariance. Because the filter chooses the smallest control change that enforces the collision-avoidance constraint, agent $i$ may deviate slightly from its path instead of briefly stopping on it. The filter also perturbs the heading, which turns any body-fixed sensor away from its target.

The goal of this paper is therefore to design, for each agent, a decentralized feedback controller that guarantees collision avoidance while achieving the primary objective~\ref{pf_po}. Since the TFL controller is well defined only away from attitude singularities, the controller must also ensure that the closed-loop trajectories avoid them for all time.

\begin{assumption}\label{ass:non_zero_thrust}
For all $t \ge 0$, the collective thrust for each agent $i \in \mathcal V$ in the multi-quadrotor system satisfies $x^i_{13} > 0$.
\end{assumption}

Since $x_{13}^i=u_f^i$ is the total rotor thrust, Assumption~\ref{ass:non_zero_thrust} excludes only unusual operating conditions, such as free fall and aggressive acrobatic maneuvers. It can also be enforced by one more ECBF row for \(x_{13}^i-f_{\min}\), \(f_{\min}>0\), which has relative degree~$2$ in \(u_d^i\).

\begin{problem}\label{prob:safe_pf}
Consider a multi-quadrotor system of $N$ homogeneous quadrotors with safe distance $d_s>0$. For each agent $i\in\mathcal V$, let the dynamics be given by the extended model~\eqref{eq:extended_control_affine} under Assumption~\ref{ass:non_zero_thrust}, with path-following manifold $\mathcal P^i$ defined in~\eqref{eq:path_following_manifold} and virtual output $\hat{\mathbf y}^i$ defined in~\eqref{eq:virtual_output}. Design a decentralized controller
$\kappa^i: \mathbb R^{14N} \to \mathbb R^4$
such that, for all admissible initial conditions in a neighborhood
of $\mathcal P_\Omega$ relative to $\Omega$, where the admissible
set is made precise in Section~\ref{sec:controller}, the
closed-loop system
\begin{enumerate}
    \item achieves~\ref{pf_po} by rendering $\mathcal P^i\cap U^i$ locally exponentially attractive along admissible closed-loop trajectories,
    \item avoids attitude singularities \(\phi^i=\pm \pi/2,\ \theta^i=\pm \pi/2\) with safety margin \(\epsilon>0\) for all $t\ge0$, and
    \item guarantees pairwise collision avoidance, that is,
    $\|x_q^i(t)-x_q^j(t)\| \ge d_s$ for all $\{i,j\}\subset\mathcal V$ and all $t\ge0$.
\end{enumerate}
\end{problem}

\section{Proposed Controller}\label{sec:controller}
To incorporate safety, the nominal TFL controller~\eqref{eq:tfl_controller} must be relaxed, but only in part: we relax its speed component alone. To enable this \emph{selective} relaxation, we reformulate the TFL controller as a QP. In particular, \eqref{eq:tfl_controller} can be written as \(D^i(\mathbf x^i)\mathbf u^i_{\mathrm{TFL}}=\mathbf v^i(\mathbf x^i)\), which is affine in \(\mathbf u^i_{\mathrm{TFL}}\) because the inputs~\eqref{eq:transversal_in} and~\eqref{eq:tang_in} depend only on \(\mathbf x^i\).  Hence, for each \(i\in\mathcal V\), the TFL controller~\eqref{eq:tfl_controller} can be written equivalently as the parametric QP
\begin{subequations}\label{eq:tfl_qp}
\begin{align}
\kappa^i_{\mathrm{TFL}}(\mathbf x^i) = \argmin_{\mathbf u^i\in\mathbb R^4}\quad
& \frac{1}{2}\|\mathbf u^i\|^2 \\
\text{s.t.}\quad
& D^i(\mathbf x^i)\mathbf u^i=\mathbf v^i(\mathbf x^i) \label{eq:tfl_constraint_hard}.
\end{align}
\end{subequations}
Since \(D^i\) in~\eqref{eq:decoupling_matrix} is invertible on \(U^i\) by construction (Section~\ref{sec:prelim}), the constraint~\eqref{eq:tfl_constraint_hard} has the unique solution \(\mathbf u^i=(D^i)^{-1}\mathbf v^i\), which is therefore the unique optimizer, so on \(U^i\) the TFL-QP~\eqref{eq:tfl_qp} coincides with the TFL controller~\eqref{eq:tfl_controller}.

Therefore, we can now add safety constraints to the nominal controller itself.  We keep the two transversal rows hard so that, whenever the QP is feasible, the nominal transverse dynamics are preserved exactly, and the heading row hard to preserve the pointing direction of body-fixed sensors.

\subsection{Roll and Pitch Constraints}

Let $a_\epsilon\eqdef \tfrac{\pi}{2}-\epsilon$, where $\epsilon>0$ is the prescribed singularity margin. For each agent $i\in\mathcal V$ and angle $\rho\in\{\phi,\theta\}$, define
\begin{equation}\label{eq:att_cbfs}
\begin{aligned}
h_{\rho}^{i\pm}(\mathbf x^i)&\eqdef a_\epsilon\mp\rho^i,\\
\mathcal C_{\rho}^{i\pm}&\eqdef
\{\mathbf x^i\in\mathbb R^{14}:h_{\rho}^{i\pm}(\mathbf x^i)\ge0\}.
\end{aligned}
\end{equation}
The $+$ and $-$ barriers enforce the upper and lower bounds on $\rho^i$, respectively. Each of the four barrier functions has relative degree $2$ on \(\mathcal X_D^i\), by direct computation. With $H_{\rho}^{i\pm}\eqdef\operatorname{col}(h_{\rho}^{i\pm},L_fh_{\rho}^{i\pm})$ and gains $\bm k_{\rho}^{i\pm}$ chosen according to Definition~\ref{def:ecbf}, the ECBF constraint, with arguments suppressed, for each $\rho\in\{\phi,\theta\}$ and each sign is
\begin{equation}\label{eq:attitude_ecbf}
L_gL_fh_{\rho}^{i\pm}\,\mathbf u^i
\ge -L_f^2h_{\rho}^{i\pm}
-(\bm k_{\rho}^{i\pm})^\top H_{\rho}^{i\pm}.
\end{equation}

\subsection{Pairwise Collision Constraints}

 For each unordered pair \(\{i,j\}\subset \mathcal V\), define the projection \(c_{ij}(\mathbf x):=\operatorname{col}(\mathbf x^i,\mathbf x^j)\in\mathbb{R}^{28}\) and pairwise collision barrier as
\begin{equation}\label{eq:Bij_clean}
\bar h_{ij}(\mathbf x)
:= h_{ij}(c_{ij}(\mathbf x))
:= \|x_q^i-x_q^j\|^2-d_s^2,
\end{equation}
for a safety distance $d_s>0$. The pairwise safe set is
\(\bar{\mathcal C}_{ij}:=\{\mathbf x\in\mathbb{R}^{14N}:\bar h_{ij}(\mathbf x)\ge 0\}\).
For notational simplicity, we henceforth suppress the argument of $\bar h_{ij}$. The function \(\bar h_{ij}\) has relative degree $4$ with respect to the combined dynamics~\eqref{eq:full_dynamics} on \(\bar{\mathcal C}_{ij}\cap\Omega\), because \(L_{G_i}L_F^3\bar h_{ij}=2(x_q^i-x_q^j)^\top B^i\), where the input matrix \(B^i\) of the fourth derivative of \(x_q^i\) has rank~$3$ on \(\mathcal X_D^i\), and \(x_q^i\neq x_q^j\) on \(\bar{\mathcal C}_{ij}\). Moreover, for every \(\ell\in \mathcal V\setminus\{i,j\}\),
\(L_{G_\ell}L_F^3\bar h_{ij}=0\), where \(G_\ell(\mathbf x)\) denotes the block-column of \(G(\mathbf x)\) associated with agent \(\ell\). Hence, the centralized ECBF constraint for the pair
\(\{i,j\}\) is
\begin{equation}\label{eq:centralized_col_constraint}
L_F^4\bar h_{ij}
+L_{G_i}L_F^3\bar h_{ij}\,\mathbf u^i
+L_{G_j}L_F^3\bar h_{ij}\,\mathbf u^j
\ge -\bm k_{ij}^\top \bar H_{ij},
\end{equation}
where
\(\bar H_{ij}:=\operatorname{col}\!\big(\bar h_{ij},L_F\bar h_{ij},L_F^2\bar h_{ij},L_F^3\bar h_{ij}\big)\), and \(\bm k_{ij}\) is chosen according to Definition~\ref{def:ecbf}.

Let \(w^{i},w^{j}\in(0,1)\) satisfy
\(w^{i}+w^{j}=1\), where $w^{i}$ and $w^{j}$ are responsibility weights used to split the centralized inequality~\eqref{eq:centralized_col_constraint} into decentralized local
constraints~\cite{wang2017safety}. For homogeneous quadrotors, we choose
\(w^{i}=w^{j}=\frac12\). Accordingly, each
agent \(i\in \mathcal V\) enforces
\begin{equation}\label{eq:collision_ecbf_constraint}
L_{G_i}L_F^3\bar h_{ij}\,\mathbf u^i
\ge
-\tfrac12
\big(
L_F^4\bar h_{ij}+\bm k_{ij}^\top \bar H_{ij}
\big),
\end{equation}
for all $j\in \mathcal V\setminus\{i\}$. The corresponding local constraint enforced by agent \(j\) is obtained
by interchanging \(i\) and \(j\). Summing the two local inequalities
recovers the centralized ECBF condition~\eqref{eq:centralized_col_constraint}. For details, see~\cite{wang2017safety}.

We denote by $\mathcal B_{\rho}^{i\pm}\subset\mathcal C_{\rho}^{i\pm}$ and $\bar{\mathcal B}_{ij}\subset\bar{\mathcal C}_{ij}$ the ECBF admissible sets associated with $h_{\rho}^{i\pm}$ and $\bar h_{ij}$, constructed according to~\eqref{eq:ecbf_safe_sets}. Their explicit form is standard and omitted.

\subsection{Decentralized Multi-Quadrotor Safe Path Following}

We now incorporate the constraints~\eqref{eq:attitude_ecbf} and~\eqref{eq:collision_ecbf_constraint} into~\eqref{eq:tfl_qp} while relaxing only the speed row of~\eqref{eq:tfl_constraint_hard}. Introduce a scalar slack $\delta^i\in\mathbb R$ and the selector $E\eqdef\col(0,0,1,0)$, so that $E\delta^i$ perturbs the $\eta$ row alone. With weights $W_i\succeq0$ and $P_i>0$, the decentralized safe path-following controller of agent $i$ is
\begin{subequations}\label{eq:tfl_safe_pf_qp}
\begin{align}
((\mathbf u^i)^\star,(\delta^i)^\star)
=&
\argmin_{\substack{\mathbf u^i\in\mathbb R^4,\,
                   \delta^i\in\mathbb R}}
\quad
\tfrac{1}{2}\|\mathbf u^i\|^2_{W_i}
+\tfrac{1}{2}P_i(\delta^i)^2
\label{eq:cost}\\
&\text{s.t.}\quad
D^i(\mathbf x^i)\mathbf u^i
=\mathbf v^i(\mathbf x^i)+E\delta^i,
\label{eq:relaxed_tfl_constraint}\\
&\phantom{\text{s.t.}\quad}
A^i(\mathbf x)\mathbf u^i
\ge-\mathbf b^i(\mathbf x).
\label{eq:ineq_const}
\end{align}
\end{subequations}
The inequality~\eqref{eq:ineq_const} collects the \(N+3\) safety constraints of agent \(i\). The first four rows of \(A^i(\mathbf x)\in\mathbb R^{(N+3)\times4}\) are \(L_gL_fh_{\rho}^{i\pm}\), one for each \(\rho\in\{\phi,\theta\}\) and each sign; the remaining \(N-1\) rows are \(L_{G_i}L_F^3\bar h_{ij}\), one for each \(j\in\mathcal V\setminus\{i\}\). The vector \(\mathbf b^i(\mathbf x)\in\mathbb R^{N+3}\) stacks the corresponding terms in~\eqref{eq:attitude_ecbf} and~\eqref{eq:collision_ecbf_constraint}, with responsibility weight \(w^i=\tfrac12\) in each collision row. The attitude rows depend only on \(\mathbf x^i\), whereas the collision rows depend on the neighbors' states.

\subsection{Safe Path Following Guarantees}

By Theorem~\ref{thm:ecbf}, forward invariance requires the feedback induced by~\eqref{eq:tfl_safe_pf_qp} to be locally Lipschitz, a property that is not automatic for parametric QPs~\cite{mestres2025regularity}. Eliminating $\mathbf u^i$ through~\eqref{eq:relaxed_tfl_constraint} with $M^i(\mathbf x)\eqdef(D^i(\mathbf x^i))^{-1}$ gives $\mathbf u^i=M^i(\mathbf v^i+E\delta^i)$ and reduces~\eqref{eq:tfl_safe_pf_qp} to the scalar problem
\begin{subequations}\label{eq:reduced_qp}
\begin{align}
(\delta^i)^\star(\mathbf x)=\argmin_{\delta^i\in\mathbb R}\;
& \tfrac12 Q^i(\mathbf x)(\delta^i)^2+q^i(\mathbf x)\,\delta^i\\
\text{s.t.}\;
& \bar A^i(\mathbf x)\,\delta^i\le\bar{\mathbf b}^i(\mathbf x), \label{eq:reduce_ineq}
\end{align}
\end{subequations}
where $Q^i\eqdef E^\top(M^i)^\top W_iM^iE+P_i>0$, $q^i\eqdef E^\top(M^i)^\top W_iM^i\mathbf v^i$, $\bar A^i\eqdef-A^iM^iE\in\mathbb R^{N+3}$, and $\bar{\mathbf b}^i\eqdef\mathbf b^i+A^iM^i\mathbf v^i$. The applied input is recovered as
\begin{equation}\label{eq:reconstruction_u_star}
(\mathbf u^i)^\star(\mathbf x)=M^i(\mathbf x)\big(\mathbf v^i(\mathbf x^i)+E(\delta^i)^\star(\mathbf x)\big).
\end{equation}

\begin{remark}\label{rem:interval}
Since $\delta^i\in\mathbb R$, each row of~\eqref{eq:reduce_ineq} with a nonzero coefficient bounds $\delta^i$ from above or below, and a row with zero coefficient is feasible if and only if its right-hand side is nonnegative. The reduced QP is thus feasible exactly when every zero-coefficient row is feasible and the largest lower bound does not exceed the smallest upper bound, which one pass through the $N+3$ rows checks in $O(N)$ operations. The optimizer is then $-q^i/Q^i$ clipped to these bounds~\cite[Sec.~8.1.1]{boyd2004convex}, so no numerical QP solver is needed, and the same check can trigger an online fallback controller, whose design is left for future work.
\end{remark}

We define, with the lifted attitude sets
\(\widetilde{\mathcal B}_{\rho}^{i\pm}\eqdef\{\mathbf x\in\mathbb R^{14N}:\mathbf x^i\in\mathcal B_{\rho}^{i\pm}\}\),
\(\bar\Omega\eqdef\Omega\cap\bigcap_{\{i,j\}}\bar{\mathcal B}_{ij}\cap\bigcap_{i,\rho,\pm}\widetilde{\mathcal B}_{\rho}^{i\pm}\)
as the subset of \(\Omega\) on which every barrier condition
holds and make the following assumption.

\begin{assumption}\label{ass:reg_slater}
For each agent \(i\in\mathcal V\) and every \(\bar{\mathbf x}\in\bar\Omega\), there exists \(\hat\delta^i\in\mathbb R\) such that
\(\bar A^{i}(\bar{\mathbf x})\hat\delta^i
<\bar{\mathbf b}^{i}(\bar{\mathbf x})\).
\end{assumption}

Assumption~\ref{ass:reg_slater} requires strict feasibility of the reduced QP~\eqref{eq:reduced_qp} at every point of \(\bar\Omega\), and analogous assumptions are standard for decentralized CBF-QPs~\cite{wang2017safety,mestres2024distributed,goarin2025decentralized}. As in these works, strict feasibility becomes more difficult to maintain as the number of simultaneously interacting agents increases.

\begin{theorem}\label{thm:qp_local_lipschitz}
Suppose that Assumption~\ref{ass:reg_slater} holds. Then, for each agent \(i\in \mathcal V\) and every \(\mathbf x\in\bar\Omega\), the reduced QP~\eqref{eq:reduced_qp} admits a unique minimizer, and both the optimizer map \(\mathbf x\mapsto (\delta^i)^\star(\mathbf x)\) and the controller~\eqref{eq:reconstruction_u_star} are locally Lipschitz continuous on \(\bar\Omega\).
\end{theorem}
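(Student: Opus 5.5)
Because \(\mathbf u^i\) has already been eliminated through \eqref{eq:relaxed_tfl_constraint}, the plan is to work only with the scalar problem~\eqref{eq:reduced_qp} and use its closed-form solution from Remark~\ref{rem:interval}. Lipschitz continuity will then follow from the regularity of the data together with the fact that clipping is Lipschitz.

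\emph{Step 1 (existence and uniqueness).} On \(\Omega\) the matrix \(D^i\) is invertible, so \(M^i=(D^i)^{-1}\) is smooth. Hence \(Q^i\), \(q^i\), \(\bar A^i\) and \(\bar{\mathbf b}^i\) are locally Lipschitz in \(\mathbf x\). This uses that \(f,g\), the Lie derivatives in \(A^i,\mathbf b^i\), and the gains \eqref{eq:transversal_in}--\eqref{eq:tang_in} are smooth on \(\Omega\). Since \(Q^i\ge P_i>0\), the objective is strongly convex. By Assumption~\ref{ass:reg_slater} the feasible set is a nonempty closed interval, so there is a unique minimizer.

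\emph{Step 2 (closed form).} Split the rows into \(\mathcal I_+=\{k:\bar A^i_k>0\}\), which give upper bounds \(\bar b^i_k/\bar A^i_k\), and \(\mathcal I_-=\{k:\bar A^i_k<0\}\), which give lower bounds. Zero-coefficient rows have \(\bar b^i_k>0\) by strict feasibility and impose nothing. Define \(U(\mathbf x)=\min_{\mathcal I_+}\bar b^i_k/\bar A^i_k\) and \(L(\mathbf x)=\max_{\mathcal I_-}\bar b^i_k/\bar A^i_k\). Then \((\delta^i)^\star=\min\{U,\max\{L,-q^i/Q^i\}\}\). Strict feasibility gives \(L<U\) at every point.

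\emph{Step 3 (local Lipschitz, the main obstacle).} The difficulty is that the index sets \(\mathcal I_\pm\) may change with \(\mathbf x\) when some \(\bar A^i_k\) crosses zero, and then \(\bar b^i_k/\bar A^i_k\) blows up. This is handled locally. Fix \(\bar{\mathbf x}\in\bar\Omega\) and a strictly feasible \(\hat\delta\). By continuity, \(\bar A^i\hat\delta<\bar{\mathbf b}^i-c\) for some \(c>0\) on a neighborhood \(\mathcal N\).

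Rows with \(\bar A^i_k(\bar{\mathbf x})\neq0\) keep their sign on a smaller neighborhood and have denominators bounded away from zero, so their bounds are Lipschitz there. For a row with \(\bar A^i_k(\bar{\mathbf x})=0\), I would argue that near \(\bar{\mathbf x}\) it is inactive in the relevant window. Any such bound with \(\bar A^i_k>0\) satisfies \(\bar b^i_k/\bar A^i_k>\hat\delta+c/\bar A^i_k\), which tends to \(+\infty\) as \(\mathbf x\to\bar{\mathbf x}\). It therefore exceeds \(\max\{\hat\delta,|q^i/Q^i|\}+1\) and cannot be the binding one. The case \(\bar A^i_k<0\) is symmetric.

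Consequently, the optimizer equals the clipping expression using only the rows with \(\bar A^i_k(\bar{\mathbf x})\neq0\), plus the constant bounds \(\pm R\) with \(R\) large. This is a composition of finitely many \(\min\)/\(\max\) operations applied to locally Lipschitz functions, hence locally Lipschitz near \(\bar{\mathbf x}\). Local Lipschitzness at every point gives local Lipschitzness on \(\bar\Omega\).

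\emph{Step 4 (the controller).} The input \eqref{eq:reconstruction_u_star} is built from products and sums of the locally Lipschitz maps \(M^i\), \(\mathbf v^i\) and \((\delta^i)^\star\). It is therefore locally Lipschitz on \(\bar\Omega\).

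As an alternative to Step 3, one could invoke the general regularity result for parametric QPs under strict feasibility in~\cite{mestres2025regularity}. The scalar argument above is self-contained, so I would use it instead.
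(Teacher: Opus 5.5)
Your proof is correct, but for the Lipschitz part it takes a different route from the paper. Your Step 1 matches the paper's existence and uniqueness argument: smooth data on \(\Omega\), \(Q^i\ge P_i>0\), a nonempty closed feasible set, and strong convexity. For local Lipschitz continuity, however, the paper notes that Assumption~\ref{ass:reg_slater} is exactly Slater's condition and invokes the general parametric-QP result~\cite[Prop.~3.3]{mestres2025regularity}. Because that proposition is stated for data on all of \(\mathbb R^{14N}\), the paper adds a remark that the coefficients can be extended smoothly outside a neighborhood of \(\bar{\mathbf x}\).

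You instead exploit the scalar slack. You write the optimizer as the clip of \(-q^i/Q^i\) to \([L,U]\) and handle the only real difficulty directly: coefficients \(\bar A^i_k\) that vanish at \(\bar{\mathbf x}\). The uniform margin \(c\) from strict feasibility sends the corresponding bounds \(\bar b^i_k/\bar A^i_k\) to \(\pm\infty\), so they can be replaced by \(\pm R\).

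One step should be tightened. Choose \(R\) uniformly on a compact neighborhood, e.g.\ \(R>|\hat\delta|+\sup|q^i/Q^i|+1\). Then state explicitly that, since \(L<\hat\delta<U\), the clipped value always lies between \(-q^i/Q^i\) and \(\hat\delta\). That is precisely why neither the discarded rows nor the artificial bounds \(\pm R\) can bind, so the two clipping formulas agree.

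Your argument is self-contained and avoids the extension technicality. It also shows exactly where strict, rather than mere, feasibility is needed, and it certifies the clipping rule of Remark~\ref{rem:interval} that is actually implemented. The paper's citation is shorter and more general. It would carry over unchanged to relaxations with a vector-valued slack (e.g.\ \(E=I_4\) as in Remark~\ref{rem:cascade}), where your one-dimensional clipping formula is no longer available.
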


\begin{proof}
Since \(\mathbf x^i\in U^i\) for every \(\mathbf x\in\Omega\), and \(D^i\) is invertible on \(U^i\) by the construction of \(U^i\), all the matrix- and vector-valued maps in~\eqref{eq:reduced_qp} are smooth on \(\Omega\), and \(Q^i>0\) on \(\Omega\) since \(W_i\succeq0\) and \(P_i>0\).  By Assumption~\ref{ass:reg_slater}, the feasible set of the reduced QP is nonempty for every \(\mathbf x\in\bar\Omega\). Since this set is closed and the objective is strongly convex, the reduced QP admits a unique minimizer \((\delta^i)^\star(\mathbf x)\). Now, take an arbitrary point \(\bar{\mathbf x}\in\bar\Omega\). On a neighborhood of \(\bar{\mathbf x}\), the reduced QP is a parametric QP with a scalar decision variable, a strongly convex objective, and affine constraints with smooth coefficients, and Assumption~\ref{ass:reg_slater} states exactly Slater's condition for this problem. Although~\cite[Prop.~3.3]{mestres2025regularity} states its hypotheses for data defined on all of \(\mathbb R^{14N}\), extending the QP coefficients smoothly outside this neighborhood changes nothing near \(\bar{\mathbf x}\), so the proposition applies. Consequently, the optimizer map \(\mathbf x\mapsto (\delta^i)^\star(\mathbf x)\) is locally Lipschitz at \(\bar{\mathbf x}\) and, since \(\bar{\mathbf x}\) was arbitrary, on \(\bar\Omega\).  Finally, the controller~\eqref{eq:reconstruction_u_star} is a composition of smooth maps with a locally Lipschitz map, hence \(\mathbf x\mapsto (\mathbf u^i)^\star(\mathbf x)\) is locally Lipschitz on \(\bar\Omega\).
\end{proof}

\begin{lemma}
\label{lem:gamma_stability}
Let the conditions of Theorem~\ref{thm:qp_local_lipschitz} hold, and for each \(i\in\mathcal V\) define \(\chi^i \eqdef \operatorname{col}(\xi_1^i,\xi_2^i,\xi_3^i,\xi_4^i,\zeta_1^i,\zeta_2^i,\zeta_3^i,\zeta_4^i)\in\mathbb R^8\), with \(\chi \eqdef \operatorname{col}(\chi^1,\dots,\chi^N)\in\mathbb R^{8N}\). Then, along closed-loop trajectories of the decentralized QP controller~\eqref{eq:tfl_safe_pf_qp} contained in \(\bar\Omega\), the transverse coordinates satisfy
$\dot\chi = A_\perp\chi,$
where \(A_\perp\in\mathbb R^{8N\times8N}\) is Hurwitz. Hence \(\chi(t)\) converges exponentially to zero. Since \(\chi=0\) locally characterizes the path-following manifold, \(\mathcal P_\Omega\) is locally exponentially attractive for closed-loop trajectories contained in \(\bar\Omega\). Along the same trajectories, the heading error \(\beta_\psi^i=\mu_1^i-\psi^i_{\text{des}}\) of every agent \(i\in\mathcal V\) converges exponentially to zero.
\end{lemma}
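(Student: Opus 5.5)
The argument rests on the hard rows of the equality constraint~\eqref{eq:relaxed_tfl_constraint}. Fix a closed-loop trajectory contained in \(\bar\Omega\). By Theorem~\ref{thm:qp_local_lipschitz}, the applied input \((\mathbf u^i)^\star\) is well defined and locally Lipschitz, so the trajectory exists and is unique. At each time it satisfies \(D^i(\mathbf x^i)(\mathbf u^i)^\star=\mathbf v^i(\mathbf x^i)+E(\delta^i)^\star\). Because \(E=\operatorname{col}(0,0,1,0)\), the first, second, and fourth rows of this identity are exactly the nominal TFL rows:
\begin{align*}
L_gL_f^3\alpha_1^i\,(\mathbf u^i)^\star&=-L_f^4\alpha_1^i+v_\xi^i,\\
L_gL_f^3\alpha_2^i\,(\mathbf u^i)^\star&=-L_f^4\alpha_2^i+v_\zeta^i,\\
L_gL_f\beta_2^i\,(\mathbf u^i)^\star&=-L_f^2\beta_2^i+v_\mu^i.
\end{align*}

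The next step uses the coordinates \(T^i\) from~\eqref{eq:transform}, which are valid on \(U^i\supset\) the \(i\)-th projection of \(\bar\Omega\). For \(j\le3\), differentiating gives \(\dot\xi_j^i=\xi_{j+1}^i\), since the relative degree of \(\alpha_1^i\) is \(4\). At the top of the chain, \(\dot\xi_4^i=L_f^4\alpha_1^i+L_gL_f^3\alpha_1^i\,(\mathbf u^i)^\star=v_\xi^i\) by the first hard row. The same computation for \(\zeta\) gives \(\dot\zeta_4^i=v_\zeta^i\). Substituting~\eqref{eq:transversal_in}, each of the \(\xi^i\)- and \(\zeta^i\)-chains becomes a companion-form linear system whose characteristic polynomial is Hurwitz by the choice of gains. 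These equations involve neither \(\delta^i\) nor the states of other agents. Stacking them over \(i\) gives \(\dot\chi=A_\perp\chi\), where \(A_\perp\) is block diagonal with the \(2N\) companion blocks. Its spectrum is the union of the spectra of these blocks, so \(A_\perp\) is Hurwitz and \(\|\chi(t)\|\le ce^{-\lambda t}\|\chi(0)\|\).

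For the heading, the fourth hard row gives \(\dot\mu_1^i=\mu_2^i\) and \(\dot\mu_2^i=v_\mu^i=-k_{\mu_1}^i(\mu_1^i-\psi^i_{\text{des}})-k_{\mu_2}^i\mu_2^i\). In the variables \((\beta_\psi^i,\mu_2^i)\) this is a Hurwitz linear system, because \(\psi^i_{\text{des}}\) is constant. Hence \(\beta_\psi^i\to0\) exponentially.

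The step needing the most care is linking \(\chi\to0\) with attractivity of \(\mathcal P_\Omega\). By~\eqref{eq:path_following_manifold}, \(\mathcal P^i\cap U^i=\{\mathbf x^i\in U^i:\chi^i=0\}\). Since \(T^i\) is a local diffeomorphism, \(\chi^i\) is part of a local coordinate system near any point of \(\mathcal P^i\cap U^i\). So near \(\mathcal P^i\), the distance to \(\mathcal P^i\) is bounded above and below by constant multiples of \(\|\chi^i\|\). I would state this only locally, on compact pieces of the trajectory's neighborhood, or via the chart. That yields local exponential attractivity for trajectories remaining in \(\bar\Omega\), which is exactly the qualifier in the statement. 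I would not claim anything global, nor that trajectories remain in \(\bar\Omega\); the lemma assumes the latter.
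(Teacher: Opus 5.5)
Your proposal is correct and follows the paper's proof: the hard first, second, and fourth rows of \eqref{eq:relaxed_tfl_constraint} (with $E=\operatorname{col}(0,0,1,0)$ touching only the $\eta$ row) leave the $\xi$-, $\zeta$-, and $\mu$-chains under their nominal Hurwitz feedback, stacking gives a block-diagonal Hurwitz $A_\perp$, and $\chi^i=0$ is exactly the defining condition of $\mathcal P^i\cap U^i$. You only make the row-by-row computation and the local comparability of $\|\chi^i\|$ with distance to $\mathcal P^i$ more explicit than the paper does; one small wording fix is that $\dot\mu_1^i=\mu_2^i$ comes from $L_g\beta_2^i=0$ (relative degree $2$), not from the fourth hard row, which supplies only $\dot\mu_2^i=v_\mu^i$.
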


\begin{proof}
Along closed-loop trajectories contained in \(\bar\Omega\), the controller is well defined and locally Lipschitz by Theorem~\ref{thm:qp_local_lipschitz}, and the hard transverse equalities in~\eqref{eq:relaxed_tfl_constraint} preserve the nominal transverse dynamics: the relaxation \(E\delta^i\) affects only the speed row, so the \(\xi\)- and \(\zeta\)-subsystems remain exactly the nominal transverse subsystems~\eqref{eq:linear_systems} with the feedback~\eqref{eq:transversal_in}. Therefore, for each \(i\in\mathcal V\), \(\dot \chi^i=A_\perp^i\chi^i\), where \(A_\perp^i\in\mathbb R^{8\times8}\) is Hurwitz by construction, and stacking all agents gives \(\dot\chi=A_\perp\chi\), where \(A_\perp\eqdef\operatorname{blkdiag}(A_\perp^1,\dots,A_\perp^N)\) is Hurwitz.  Since the coordinate transformation~\eqref{eq:transform} is a local diffeomorphism at every point of \(\Omega\), \(\chi\) is a smooth function on \(\Omega\), and for \(\mathbf x\in\Omega\) we have \(\chi=0\) if and only if \(\mathbf x\in\mathcal P_\Omega\). So \(\chi=0\) locally characterizes the path-following manifold on the operating neighborhood. Likewise, the relaxation leaves the \(\mu\)-subsystem under the nominal tangential feedback \(v_\mu^i\) in~\eqref{eq:tang_in}, whose characteristic polynomial is Hurwitz, so \(\operatorname{col}(\beta_\psi^i,\mu_2^i)\) obeys linear, exponentially stable dynamics.
\end{proof}

 Let \(P_\perp\succ0\) solve the Lyapunov equation \(A_\perp^\top P_\perp+P_\perp A_\perp=-I\), with the Hurwitz matrix \(A_\perp\) of Lemma~\ref{lem:gamma_stability}, and define \(V(\mathbf x)\eqdef \chi^\top P_\perp \chi\). Then $V$ is locally positive definite with respect to
$\mathcal P_\Omega$ and, by Lemma~\ref{lem:gamma_stability}, satisfies $\dot V=-\|\chi\|^2\le0$ along closed-loop solutions
in $\bar\Omega$. Define the sublevel set
$\Omega_c\eqdef\{\mathbf x\in\Omega:V(\mathbf x)<c\}$, with $c$ small
enough that the output of every agent in $\Omega_c$ stays strictly
inside its tube $U_\gamma^i$. Such a $c$ exists because we require
$\|s^i\|$ to be bounded away from zero near the boundary of
$U_\gamma^i$, a mild condition on the path representation. Define the admissible set
$\Omega_0\eqdef\Omega_c\cap\bigcap_{i,\rho,\pm}\widetilde{\mathcal B}_{\rho}^{i\pm}\cap\bigcap_{\{i,j\}}\bar{\mathcal B}_{ij}$. Let
$\mathbf x(0)\in\Omega_0$ and let $t^\star\le\infty$ be the first time
the closed-loop solution leaves $\Omega_c\cap\bar\Omega$. On
$[0,t^\star)$ the controller is well defined and locally Lipschitz by
Theorem~\ref{thm:qp_local_lipschitz}, $\dot V=-\|\chi\|^2\le0$, and
the QP of each agent enforces every ECBF inequality
in~\eqref{eq:ineq_const}. Suppose $t^\star$ is finite. The solution
can then exit only through the boundary $V=c$, through the boundary of
an ECBF admissible set, or through the boundary of $\Omega$ (cf.\ the
sublevel-set argument in~\cite[Thm.~V.2]{el2007passivity}). The first
exit is impossible because $V$ is nonincreasing on $[0,t^\star)$ and
$V(\mathbf x(0))<c$. The second is impossible because the comparison argument underlying
Theorem~\ref{thm:ecbf} applies on any interval where the constraints
hold along the solution.

By construction
of the sets $U^i$, the boundary of $\Omega$ is reached only
through the position, the roll and pitch, or the thrust. Here we take
each $U^i$ large enough to contain every state whose output lies
strictly inside $U_\gamma^i$, whose roll and pitch satisfy
$|\phi^i|,|\theta^i|\le\tfrac{\pi}{2}-\epsilon$, and whose thrust is
positive. Such a choice is possible whenever $\det D^i\neq0$ and the
relative-degree condition hold on that set.  For the paths of Section~\ref{sec:sim}, \(\det D^i=2(x_{13}^i)^2\cos\phi^i/(I_xI_yI_zm^3\cos\theta^i)\) at every position other than the circle center, which lies outside the tube. Hence \(D^i\) is singular only at zero thrust and at \(\phi^i=\pm\pi/2\), and undefined only at \(\theta^i=\pm\pi/2\). For other path representations, this condition must be verified directly. The position
stays strictly inside the tube because $V<c$, the roll and pitch stay
inside $\pm(\tfrac{\pi}{2}-\epsilon)$ by the attitude barriers, and
Assumption~\ref{ass:non_zero_thrust} keeps $x_{13}^i>0$, so the third
exit is impossible as well. Hence no finite first exit time exists.
Assuming the closed-loop solution is forward complete, a hypothesis we
also place in Theorem~\ref{thm:main_safety_only} because the analysis does not bound the along-path states,
$\mathbf x(0)\in\Omega_0$ implies
$\mathbf x(t)\in\Omega_c\cap \bar\Omega$ for all $t\ge0$. Without
forward completeness, every statement holds on the maximal interval of
existence. For non-compact paths, these bounds must also hold
uniformly along the path, as in~\cite{eldering2013noncompact}. We now state the main safety
guarantee.

\begin{theorem}
\label{thm:main_safety_only}
Consider the multi-quadrotor system with each agent \(i\in\mathcal V\) applying the decentralized QP controller~\eqref{eq:tfl_safe_pf_qp}, equivalently~\eqref{eq:reconstruction_u_star}, under Assumptions~\ref{ass:non_zero_thrust} and~\ref{ass:reg_slater}.  If \(\mathbf x(0)\in\Omega_0\) and the closed-loop solution is forward complete, then all agents avoid the attitude singularities with margin \(\epsilon\) for all \(t\ge0\), and \(\|x_q^i(t)-x_q^j(t)\|\ge d_s\) for all \(t\ge0\) and every unordered pair \(\{i,j\}\subset\mathcal V\).
\end{theorem}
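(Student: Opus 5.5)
The plan is to combine the invariance argument given just before the statement with Theorem~\ref{thm:ecbf}, applied on the maximal interval where the controller is defined. First, I will invoke the first-exit argument preceding the theorem. Since \(\mathbf x(0)\in\Omega_0\) and the solution is forward complete, it yields \(\mathbf x(t)\in\Omega_c\cap\bar\Omega\) for all \(t\ge0\). On this set, Theorem~\ref{thm:qp_local_lipschitz} provides a unique, locally Lipschitz feedback \((\mathbf u^i)^\star(\mathbf x)\) for every agent. Existence and uniqueness of the closed-loop solution therefore hold, and each agent's QP satisfies every row of~\eqref{eq:ineq_const} at every time.

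Second, I will treat the attitude barriers agent by agent. For each \(\rho\in\{\phi,\theta\}\) and each sign, the row of~\eqref{eq:ineq_const} is exactly the ECBF inequality~\eqref{eq:attitude_ecbf} for \(h_\rho^{i\pm}\). This barrier has relative degree \(2\) on \(\mathcal X_D^i\supset U^i\). Along the solution, the inequality gives
\[
\ddot h_\rho^{i\pm}+k_2\dot h_\rho^{i\pm}+k_1 h_\rho^{i\pm}\ge0 .
\]
Writing \(\Lambda_1=\dot h-\nu_1 h\), this yields \(\dot\Lambda_1-\nu_2\Lambda_1\ge0\). Starting from \(\Lambda_1(0)\ge0\), which holds because \(\mathbf x(0)\in\widetilde{\mathcal B}_\rho^{i\pm}\), the comparison lemma gives \(\Lambda_1(t)\ge0\), and then \(h_\rho^{i\pm}(t)\ge0\). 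This is precisely the argument behind Theorem~\ref{thm:ecbf}, used on the trajectory rather than globally, which is legitimate because the inequality holds for all \(t\ge0\). The conclusion is \(|\phi^i(t)|,|\theta^i(t)|\le\pi/2-\epsilon\).

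Third, I will treat collisions pairwise. For a pair \(\{i,j\}\), agent \(i\) enforces~\eqref{eq:collision_ecbf_constraint} and agent \(j\) enforces the same inequality with the indices swapped. Adding the two and using \(w^i+w^j=1\) recovers the centralized inequality~\eqref{eq:centralized_col_constraint}. Because \(L_{G_\ell}L_F^3\bar h_{ij}=0\) for every \(\ell\notin\{i,j\}\), this is exactly \(\bar h_{ij}^{(4)}\ge-\bm k_{ij}^\top\bar H_{ij}\) along the full closed loop. The relative-degree-\(4\) structure holds on \(\bar{\mathcal C}_{ij}\cap\Omega\), and the trajectory stays there by the first step. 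The four-stage chain \(\Lambda_0,\dots,\Lambda_3\) then gives, by repeated comparison from \(\mathbf x(0)\in\bar{\mathcal B}_{ij}\), that \(\Lambda_3\ge0\Rightarrow\Lambda_2\ge0\Rightarrow\Lambda_1\ge0\Rightarrow\bar h_{ij}\ge0\). Hence \(\|x_q^i-x_q^j\|\ge d_s\).

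The main obstacle is that Theorem~\ref{thm:ecbf} is stated for a controller satisfying the constraint on all of \(\mathbb R^n\), whereas here the constraint holds only on \(\bar\Omega\). The first-exit argument also appeals to ECBF invariance to rule out leaving \(\bar\Omega\), which risks circularity. I will break this by working on \([0,t^\star)\) with \(t^\star\) the first exit time. On that interval the inequalities hold pointwise, so the comparison chain gives nonnegativity of every \(\Lambda_k\) on \([0,t^\star)\) and, by continuity, at \(t^\star\). This excludes exit through a barrier boundary, unless the relevant \(\Lambda_k\) merely touches zero, which still keeps the state in the closed set \(\bar{\mathcal B}_{ij}\) or \(\mathcal B_\rho^{i\pm}\). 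Once \(t^\star=\infty\) is established, the trajectory-wise statements above hold for all \(t\ge0\).
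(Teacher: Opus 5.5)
Your proposal is correct and matches the paper's proof step for step. You use the preceding first-exit argument to keep the trajectory in \(\bar\Omega\), where Theorem~\ref{thm:qp_local_lipschitz} makes the feedback locally Lipschitz, then apply the comparison argument of Theorem~\ref{thm:ecbf} along the solution to each attitude barrier and, after summing the two decentralized inequalities, to each pairwise collision barrier; your explicit \(\Lambda_k\) chain and your first-exit treatment of the apparent circularity only spell out what the paper's preceding argument asserts.
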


\begin{proof}
Since \(\mathbf x(0)\in\Omega_0\) and the closed-loop solution is forward complete, the argument preceding the theorem gives \(\mathbf x(t)\in\bar\Omega\) for all \(t\ge0\), where each feedback map \(\mathbf x\mapsto(\mathbf u^i)^\star(\mathbf x)\) is locally Lipschitz by Theorem~\ref{thm:qp_local_lipschitz}.
The QP of agent \(i\) enforces~\eqref{eq:attitude_ecbf} for every \(\rho\in\{\phi,\theta\}\) and each sign, and \(\mathbf x(0)\in\Omega_0\subseteq\widetilde{\mathcal B}_{\rho}^{i\pm}\) gives \(\mathbf x^i(0)\in\mathcal B_{\rho}^{i\pm}\). The comparison argument of Theorem~\ref{thm:ecbf}, applied along the solution in \(\bar\Omega\) as in the argument preceding the theorem, therefore renders each corresponding admissible set forward invariant, so  \(h_{\rho}^{i\pm}(\mathbf x^i(t))\ge0\) for all \(t\ge0\) and \(\rho\in\{\phi,\theta\}\). By~\eqref{eq:att_cbfs}, \(|\rho^i(t)|\le a_\epsilon=\tfrac{\pi}{2}-\epsilon\) for \(\rho\in\{\phi,\theta\}\), so every agent avoids the attitude singularities with margin \(\epsilon\).
Finally, fix any unordered pair \(\{i,j\}\subset \mathcal V\), and define the aggregate feedback law
\(\kappa(\mathbf x):=\operatorname{col}\big((\mathbf u^1)^\star(\mathbf x),\dots,(\mathbf u^N)^\star(\mathbf x)\big)\).
By Theorem~\ref{thm:qp_local_lipschitz}, \(\kappa\) is locally Lipschitz on \(\bar\Omega\). The local QP of agent \(i\) enforces~\eqref{eq:collision_ecbf_constraint}, and the local QP of agent \(j\) enforces the analogous inequality obtained by interchanging \(i\) and \(j\). Adding these two inequalities yields the ECBF condition~\eqref{eq:centralized_col_constraint} for \(\bar h_{ij}\) along the full closed-loop multi-agent system~\eqref{eq:full_dynamics}. Since \(\mathbf x(0)\in \Omega_0\subseteq \bar{\mathcal B}_{ij}\), the same comparison argument, applied to the aggregate feedback \(\kappa\), implies that \(\bar h_{ij}(\mathbf x(t))\ge 0\), that is, \(\|x_q^i(t)-x_q^j(t)\|\ge d_s\) by~\eqref{eq:Bij_clean}, for all \(t\ge 0\).
Since \(\{i,j\}\) was arbitrary, pairwise collision avoidance holds for all unordered pairs.
\end{proof}

 Since every closed-loop trajectory starting in \(\Omega_0\) remains in \(\bar\Omega\), Lemma~\ref{lem:gamma_stability} applies along all such trajectories, and together the two results solve Problem~\ref{prob:safe_pf} under Assumptions~\ref{ass:non_zero_thrust} and~\ref{ass:reg_slater} and the forward-completeness hypothesis of Theorem~\ref{thm:main_safety_only}, with the admissible initial conditions given by \(\Omega_0\). As in other CBF-based schemes, safety does not imply liveness, that is, continued progress along the path, and symmetric encounters can in principle bring agents to a permanent stop~\cite{wang2017safety,grover2023deadlock}.

\begin{remark}\label{rem:cascade}
  If all four TFL equalities in~\eqref{eq:relaxed_tfl_constraint} are relaxed independently,
so that \(E=I_4\) and \(\bm\delta^i\in\mathbb R^4\), and the objective is
chosen to minimize
\(\tfrac12\|\mathbf u^i-\mathbf u_{\mathrm{TFL}}^i\|_{W_i}^2\),
then the equality constraint only defines
\(\bm\delta^i=D^i\mathbf u^i-\mathbf v^i\). The resulting QP reduces
exactly to the nominal TFL controller followed by the ECBF safety
filter~\eqref{eq:qp_safety_filter}. If instead \(E=\col(0,0,1,0)\) and \(W_i=0\), then \(q^i=0\) in~\eqref{eq:reduced_qp}, so \((\delta^i)^\star\) is the feasible slack of least magnitude. It is zero, and~\eqref{eq:tfl_controller} is applied unchanged, whenever the TFL input satisfies~\eqref{eq:ineq_const}. Moreover, since both transversal rows remain hard, collision avoidance cannot induce an off-path vertical maneuver that exposes a lower quadrotor to the downwash of an upper one~\cite{wang2017safe}.
\end{remark}

\begin{figure}[htbp]
    \centering
    \includegraphics[width=0.9\columnwidth]{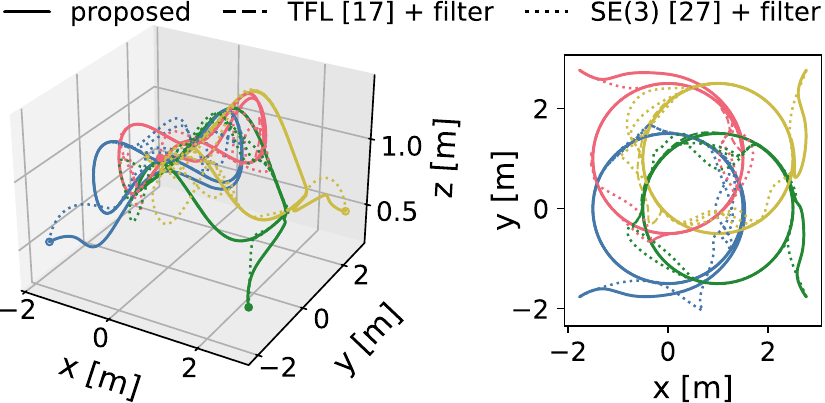}
    \vspace{-0.1in}
    \caption{Path following in Drake for four quadrotors on intersecting circular paths: proposed (solid), TFL~\cite{akhtar2012path}\,$+$\,filter (dashed), SE(3)~\cite{lee2010geometric}\,$+$\,filter (dotted). \textbf{Left:} three-dimensional trajectories. \textbf{Right:} top view. From off-path initial conditions, the proposed controller converges to and, after $15$\,s, remains on its assigned path to within $0.5$\,mm, while both cascades deviate from theirs to avoid collisions.}
    \label{fig:track_circ}
    \vspace{-0.2in}
\end{figure}

\begin{figure*}[t]
    \centering
    \includegraphics[width=\linewidth]{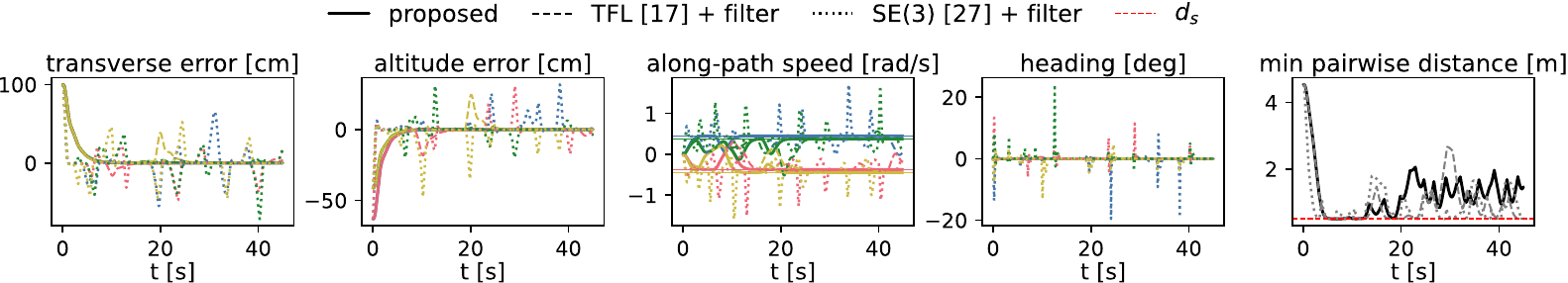}

    \vspace{-0.14in}
    \caption{Output channels and minimum pairwise distance for the circle scenario: proposed (solid), TFL~\cite{akhtar2012path}\,$+$\,filter (dashed), SE(3)~\cite{lee2010geometric}\,$+$\,filter (dotted). The proposed controller drives the transverse and altitude channels to their references and resolves conflicts through the along-path speed, while both cascades perturb every channel.  }
        \label{fig:channels}
    \vspace{-0.2in}
\end{figure*}

\section{Simulation Results}\label{sec:sim}
We evaluate the QP controller~\eqref{eq:tfl_safe_pf_qp} in the rigid-body physics engine \emph{Drake}~\cite{drake}, with mass $m=1.923\,\mathrm{kg}$ and inertia $\operatorname{diag}(I_x,I_y,I_z)=(1.15,\,1.15,\,2.18)\!\times\!10^{-2}\,\mathrm{kg\,m^2}$. Two cascades serve as baselines on identical scenarios. The first, denoted TFL~\cite{akhtar2012path}\,$+$\,filter, passes the nominal TFL controller~\eqref{eq:tfl_controller} of~\cite{akhtar2012path} through the ECBF safety filter~\eqref{eq:qp_safety_filter}, the special case $E=I_4$ of Remark~\ref{rem:cascade}.  The second, SE(3)~\cite{lee2010geometric}\,$+$\,filter, applies the same collision barrier to the commanded acceleration of the geometric controller on SE(3) of~\cite{lee2010geometric}. At this level the barrier has relative degree~2, so the filter uses a double pole at $-3.75$ instead of the quadruple pole of the other two controllers, and it has no attitude barriers. All three controllers share the model, the collision barrier, the safety distance, and the desired speeds. The proposed controller uses $W_i=I$, $P_i=100$, singularity margin $\epsilon=0.1$\,rad, attitude ECBF poles at $-10$, and collision ECBF poles at $-3.75$. These weights give \(P_i/Q^i\ge0.97\), so the speed bias that \(W_i=0\) would remove (Remark~\ref{rem:cascade}) stays below \(7\times10^{-4}\)\,rad/s in median during conflict-free flight. It runs at $400$\,Hz using the clipping rule of Remark~\ref{rem:interval}, without a numerical QP solver. This implementation is a sampled approximation of the continuous-time controller~\eqref{eq:tfl_safe_pf_qp}, and we claim no separate sampled-data invariance guarantee.

The first scenario places four quadrotors on intersecting circles of radius $1.5\,\mathrm{m}$, centered at $(0,0)$, $(0,1)$, $(1,0)$, and $(1,1)\,\mathrm{m}$, with safety distance $d_s=0.5\,\mathrm{m}$. For agent $i$ with circle center $(x_c^i,y_c^i)$, the virtual output is $\hat{\mathbf y}^i=\operatorname{col}\big((x^i-x_c^i)^2+(y^i-y_c^i)^2-1.5^2,\ z^i-z_{\mathrm{des}}(x^i),\ \operatorname{atan2}(y^i-y_c^i,\,x^i-x_c^i), \psi^i\big)$, where $z_{\mathrm{des}}(x)=1.0+0.25\sin(2\pi x/3)$ and $v_{\mathrm{des}}=(0.45,\,-0.375,\,0.375,\,-0.45)\,\mathrm{rad/s}$. The third component is a smooth local path coordinate in the sense of Section~\ref{sec:prelim}: it restricts to the path parameter on the circle, and its relative degree is verified by direct computation.

Fig.~\ref{fig:track_circ} shows the circle-scenario trajectories. From off-path initial conditions, the proposed controller converges to its assigned path and, for $t\ge15$\,s, its maximum distance to the path stays below $0.5$\,mm. Over the same window, the cascade TFL~\cite{akhtar2012path}\,$+$\,filter deviates from its path by up to $40$\,cm, and SE(3)~\cite{lee2010geometric}\,$+$\,filter by up to $71$\,cm. The minimum pairwise distance never falls below $d_s$ for the proposed controller and for TFL~\cite{akhtar2012path}\,$+$\,filter, while SE(3)~\cite{lee2010geometric}\,$+$\,filter dips $7$\,mm inside $d_s$. This occurs because the acceleration-level certificate on~\cite{lee2010geometric} covers the commanded acceleration, which the attitude loop realizes with a lag. Along every run of the proposed controller, the reduced QP~\eqref{eq:reduced_qp} remained strictly feasible at each step, and the collective thrust stayed positive, so Assumption~\ref{ass:non_zero_thrust} held. In the circle scenario, the feasible interval of Remark~\ref{rem:interval} never narrowed below twice the largest slack applied, the thrust never fell below $17.8$\,N, $95\%$ of hover, and \(|\eta_2^i|\) stayed below $0.46$\,rad/s. The initial states were verified to lie in the ECBF admissible sets, but, because the runs start off the path, they need not satisfy the sublevel condition in \(\Omega_0\). The runs therefore demonstrate safe behavior on a larger domain than the certified set, with every barrier condition satisfied throughout.

Fig.~\ref{fig:channels} shows, for the circle scenario, the radial distance of each agent to its circle, its altitude error \(z^i-z_{\mathrm{des}}(x^i)\), its along-path speed \(\eta_2^i\), and its heading \(\psi^i\), with \(\psi^i_{\mathrm{des}}=0\), together with the minimum pairwise distance. The proposed controller holds the transverse, altitude, and heading channels at their references and absorbs every conflict in the along-path speed. Its heading error stays below $10^{-3}$\,degrees, consistent with Lemma~\ref{lem:gamma_stability}, while the cascades perturb the heading by up to $1.4^\circ$ and $19.6^\circ$, with visible yaw oscillation at the encounters. A scenario with two quadrotors on intersecting non-planar sinusoidal paths and animations are at \url{https://gradslab.github.io/safe_multiquad_pf/}.

\section{Conclusion}\label{sec:conclusion}

This paper presented a decentralized safe path-following controller for multiple quadrotors on intersecting paths. By reformulating the transverse feedback linearization (TFL) controller as a QP, the proposed method preserves the transversal path-following dynamics and the nominal heading-error dynamics while incorporating ECBF-based safety constraints by relaxing only the speed constraint. Simulations in Drake showed that the proposed controller resolves conflicts at intersections through the speed alone, whereas the safety-filtered baselines leave their assigned paths and perturb their heading. Future work will incorporate actuator constraints and design a fallback controller for the instances where Assumption~\ref{ass:reg_slater} fails.

\textit{Disclosure of Generative AI Use: } The Large Language Model (LLM)~\cite{openai_chatgpt_2026} has been used to improve the syntax and grammar of some paragraphs in the manuscript.

\bibliographystyle{IEEEtran}
\bibliography{references}

@ARTICLE{beul2018fast,
  author={Beul, Marius and Droeschel, David and Nieuwenhuisen, Matthias and Quenzel, Jan and Houben, Sebastian and Behnke, Sven},
  journal={IEEE Robotics and Automation Letters},
  title={{Fast Autonomous Flight in Warehouses for Inventory Applications}},
  year={2018},
  volume={3},
  number={4},
  pages={3121--3128},
  doi={10.1109/LRA.2018.2849833}}

@ARTICLE{silano2021power,
  author={Silano, Giuseppe and Baca, Tomas and Penicka, Robert and Liuzza, Davide and Saska, Martin},
  journal={IEEE Robotics and Automation Letters},
  title={{Power Line Inspection Tasks With Multi-Aerial Robot Systems Via Signal Temporal Logic Specifications}},
  year={2021},
  volume={6},
  number={2},
  pages={4169--4176},
  doi={10.1109/LRA.2021.3068114}}

@article{nielsen2008local,
author = {Nielsen, Christopher and Maggiore, Manfredi},
title = {{On Local Transverse Feedback Linearization}},
journal = {SIAM Journal on Control and Optimization},
volume = {47},
number = {5},
pages = {2227--2250},
year = {2008},
doi = {10.1137/070682125}
}

@article{mansouri2018cooperative,
title = {Cooperative coverage path planning for visual inspection},
journal = {Control Engineering Practice},
volume = {74},
pages = {118--131},
year = {2018},
doi = {10.1016/j.conengprac.2018.03.002},
author = {Sina Sharif Mansouri and Christoforos Kanellakis and Emil Fresk and Dariusz Kominiak and George Nikolakopoulos},
}

@INPROCEEDINGS{Cichella2012cdc,
  author={Cichella, Venanzio and Kaminer, Isaac and Xargay, Enric and Dobrokhodov, Vladimir and Hovakimyan, Naira and Aguiar, A. Pedro and Pascoal, António M.},
  booktitle={2012 51st IEEE Conference on Decision and Control (CDC)},
  title={{A Lyapunov-based approach for Time-Coordinated 3D Path-Following of multiple quadrotors}},
  year={2012},
  pages={1776--1781},
  doi={10.1109/CDC.2012.6425933}}

@INPROCEEDINGS{Mehdi2017cdc,
  author={Mehdi, Syed Bilal and Cichella, Venanzio and Marinho, Thiago and Hovakimyan, Naira},
  booktitle={2017 IEEE 56th Annual Conference on Decision and Control (CDC)},
  title={Collision avoidance in multi-vehicle cooperative missions using speed adjustment},
  year={2017},
  pages={2152--2157},
  doi={10.1109/CDC.2017.8263963}}

@ARTICLE{xie2020cooperative,
  author={Xie, Wei and Cabecinhas, David and Cunha, Rita and Silvestre, Carlos},
  journal={IEEE Transactions on Systems, Man, and Cybernetics: Systems},
  title={{Cooperative Path Following Control of Multiple Quadcopters With Unknown External Disturbances}},
  year={2022},
  volume={52},
  number={1},
  pages={667--679},
  doi={10.1109/TSMC.2020.3032401}}

@INPROCEEDINGS{ames2019control,
  author={Ames, Aaron D. and Coogan, Samuel and Egerstedt, Magnus and Notomista, Gennaro and Sreenath, Koushil and Tabuada, Paulo},
  booktitle={2019 18th European Control Conference (ECC)},
  title={{Control Barrier Functions: Theory and Applications}},
  year={2019},
  pages={3420--3431},
  doi={10.23919/ECC.2019.8796030}}

@INPROCEEDINGS{nguyen2016exponential,
  author={Nguyen, Quan and Sreenath, Koushil},
  booktitle={2016 American Control Conference (ACC)},
  title={{Exponential Control Barrier Functions for enforcing high relative-degree safety-critical constraints}},
  year={2016},
  pages={322--328},
  doi={10.1109/ACC.2016.7524935}}

@ARTICLE{wang2017safety,
  author={Wang, Li and Ames, Aaron D. and Egerstedt, Magnus},
  journal={IEEE Transactions on Robotics},
  title={{Safety Barrier Certificates for Collisions-Free Multirobot Systems}},
  year={2017},
  volume={33},
  number={3},
  pages={661--674},
  doi={10.1109/TRO.2017.2659727}}

@ARTICLE{yao2022guiding,
  author={Yao, Weijia and de Marina, H{\'e}ctor Garc{\'i}a and Sun, Zhiyong and Cao, Ming},
  journal={IEEE Transactions on Robotics},
  title={Guiding Vector Fields for the Distributed Motion Coordination of Mobile Robots},
  year={2023},
  volume={39},
  number={2},
  pages={1119--1135},
  doi={10.1109/TRO.2022.3224257}}

@INPROCEEDINGS{wang2017safe,
  author={Wang, Li and Ames, Aaron D. and Egerstedt, Magnus},
  booktitle={2017 IEEE International Conference on Robotics and Automation (ICRA)},
  title={Safe certificate-based maneuvers for teams of quadrotors using differential flatness},
  year={2017},
  pages={3293--3298},
  doi={10.1109/ICRA.2017.7989375}}

@INPROCEEDINGS{goarin2025decentralized,
  author={Goarin, Manohari and Li, Guanrui and Saviolo, Alessandro and Loianno, Giuseppe},
  booktitle={2025 IEEE International Conference on Robotics and Automation (ICRA)},
  title={{Decentralized Nonlinear Model Predictive Control for Safe Collision Avoidance in Quadrotor Teams with Limited Detection Range}},
  year={2025},
  pages={5387--5393},
  doi={10.1109/ICRA55743.2025.11128389}}

@INPROCEEDINGS{li2020fast,
  author={Li, Zhichao and Arslan, Omur and Atanasov, Nikolay},
  booktitle={2020 IEEE International Conference on Robotics and Automation (ICRA)},
  title={{Fast and Safe Path-Following Control using a State-Dependent Directional Metric}},
  year={2020},
  pages={6176--6182},
  doi={10.1109/ICRA40945.2020.9197377}}

@article{NieMag04,
title = {{Maneuver regulation via transverse feedback linearization: Theory and examples}},
journal = {IFAC Proceedings Volumes},
volume = {37},
number = {13},
pages = {57--64},
year = {2004},
note = {6th IFAC Symposium on Nonlinear Control Systems 2004 (NOLCOS 2004), Stuttgart, Germany, 1-3 September, 2004},
doi = {10.1016/S1474-6670(17)31200-4},
author = {Chris Nielsen and Manfredi Maggiore},
}

@ARTICLE{AkhWasNie2013Journal,
  author={Akhtar, Adeel and Nielsen, Christopher and Waslander, Steven L.},
  journal={IEEE Transactions on Robotics},
  title={{Path Following Using Dynamic Transverse Feedback Linearization for Car-Like Robots}},
  year={2015},
  volume={31},
  number={2},
  pages={269--279},
  doi={10.1109/TRO.2015.2395711}}

@INPROCEEDINGS{akhtar2012path,
  author={Akhtar, Adeel and Waslander, Steven L. and Nielsen, Christopher},
  booktitle={2012 51st IEEE Conference on Decision and Control (CDC)},
  title={Path following for a quadrotor using dynamic extension and transverse feedback linearization},
  year={2012},
  pages={3551--3556},
  doi={10.1109/CDC.2012.6425945}}

@INPROCEEDINGS{roza2012path,
  author={Roza, Ashton and Maggiore, Manfredi},
  booktitle={2012 American Control Conference (ACC)},
  title={Path following controller for a quadrotor helicopter},
  year={2012},
  pages={4655--4660},
  doi={10.1109/ACC.2012.6315061}}

@misc{drake,
  title={{Drake: Model-based design and verification for robotics}},
  author={Tedrake, Russ and others},
  year={2019},
  note={\url{https://drake.mit.edu}}
}

@INPROCEEDINGS{bouabdallah2005backstepping,
  author={Bouabdallah, S. and Siegwart, R.},
  booktitle={Proceedings of the 2005 IEEE International Conference on Robotics and Automation},
  title={{Backstepping and Sliding-mode Techniques Applied to an Indoor Micro Quadrotor}},
  year={2005},
  pages={2247--2252},
  doi={10.1109/ROBOT.2005.1570447}}

@book{eldering2013noncompact,
  author    = {Eldering, Jaap},
  title     = {{Normally Hyperbolic Invariant Manifolds: The Noncompact Case}},
  series    = {Atlantis Studies in Dynamical Systems},
  volume    = {2},
  publisher = {Atlantis Press},
  year      = {2013}
}

@article{mestres2025regularity,
title = {Regularity properties of optimization-based controllers},
journal = {{European Journal of Control}},
volume = {81},
pages = {101098},
year = {2025},
doi = {10.1016/j.ejcon.2024.101098},
author = {Pol Mestres and Ahmed Allibhoy and Jorge Cortés},
}

@book{boyd2004convex,
  author    = {Boyd, Stephen and Vandenberghe, Lieven},
  title     = {Convex Optimization},
  publisher = {Cambridge University Press},
  year      = {2004}
}

@ARTICLE{mestres2024distributed,
  author={Mestres, Pol and Nieto-Granda, Carlos and Cortés, Jorge},
  journal={IEEE Robotics and Automation Letters},
  title={{Distributed Safe Navigation of Multi-Agent Systems Using Control Barrier Function-Based Controllers}},
  year={2024},
  volume={9},
  number={7},
  pages={6760--6767},
  doi={10.1109/LRA.2024.3414268}}

@INPROCEEDINGS{el2007passivity,
  author={El-Hawwary, Mohamed I. and Maggiore, Manfredi},
  booktitle={2007 46th IEEE Conference on Decision and Control},
  title={Passivity-based stabilization of non-compact sets},
  year={2007},
  pages={1734--1739},
  doi={10.1109/CDC.2007.4434732}}

@article{grover2023deadlock,
author = {Jaskaran Grover and Changliu Liu and Katia Sycara},
title ={{The Before, During, and After of Multi-robot Deadlock}},
journal = {The International Journal of Robotics Research},
volume = {42},
number = {6},
pages = {317--336},
year = {2023},
doi = {10.1177/02783649221074718}
}

@INPROCEEDINGS{lee2010geometric,
  author={Lee, Taeyoung and Leok, Melvin and McClamroch, N. Harris},
  booktitle={2010 49th IEEE Conference on Decision and Control (CDC)},
  title={{Geometric tracking control of a quadrotor UAV on SE(3)}},
  year={2010},
  pages={5420--5425},
  doi={10.1109/CDC.2010.5717652}}

@misc{openai_chatgpt_2026,
  author = {{OpenAI}},
  title  = {{ChatGPT} ({GPT-5.4})},
  year   = {2026},
  note   = {{Large language model, released Mar.~5, 2026}},
  url    = {https://chatgpt.com/}
}

\end{document}